\def\draft{0}
\let\mathbb\varmathbb
\crefname{lemma}{Lemma}{Lemmas}
\crefname{fact}{Fact}{Facts}
\crefname{theorem}{Theorem}{Theorems}
\crefname{corollary}{Corollary}{Corollaries}
\crefname{claim}{Claim}{Claims}
\crefname{example}{Example}{Examples}
\crefname{problem}{Problem}{Problems}
\crefname{definition}{Definition}{Definitions}
\crefname{exercise}{Exercise}{Exercises}
\newtheorem{theorem}{Theorem}[section]
\newtheorem*{theorem*}{Theorem}
\newtheorem{lemma}[theorem]{Lemma}
\newtheorem*{lemma*}{Lemma}
\newtheorem*{fact*}{Fact}
\newtheorem*{proposition*}{Proposition}
\newtheorem{corollary}[theorem]{Corollary}
\newtheorem*{corollary*}{Corollary}
\newtheorem*{hypothesis*}{Hypothesis}
\newtheorem*{conjecture*}{Conjecture}
\theoremstyle{definition}
\newtheorem{definition}[theorem]{Definition}
\newtheorem*{definition*}{Definition}
\newtheorem*{construction*}{Construction}
\newtheorem*{example*}{Example}
\newtheorem*{question*}{Question}
\newtheorem*{assumption*}{Assumption}
\newtheorem*{problem*}{Problem}
\newtheorem*{openquestion*}{Open Question}
\theoremstyle{remark}
\newtheorem*{claim*}{Claim}
\newtheorem{remark}[theorem]{Remark}
\newtheorem*{remark*}{Remark}
\newtheorem*{observation*}{Observation}
\let\originalleft\left
\let\originalright\right
\renewcommand{\left}{\mathopen{}\mathclose\bgroup\originalleft}
\renewcommand{\right}{\aftergroup\egroup\originalright}
\let\latexparagraph\paragraph
\RenewDocumentCommand{\paragraph}{som}{%
  \IfBooleanTF{#1}
    {\latexparagraph*{#3}}
    {\IfNoValueTF{#2}
       {\latexparagraph{\maybe@addperiod{#3}}}
       {\latexparagraph[#2]{\maybe@addperiod{#3}}}%
  }%
}
\newcommand{\maybe@addperiod}[1]{%
  #1\@addpunct{.}%
}
\newcommand{\Paren}[1]{\left(#1\right)}
\newcommand{\brac}[1]{[#1]}
\newcommand{\Brac}[1]{\left[#1\right]}
\newcommand{\card}[1]{\lvert#1\rvert}
\newcommand{\set}[1]{\{#1\}}
\newcommand{\Set}[1]{\left\{#1\right\}}
\newcommand{\Esymb}{\mathbb{E}}
\newcommand{\Psymb}{\mathbb{P}}
\DeclareMathOperator*{\E}{\Esymb}
\DeclareMathOperator*{\ProbOp}{\Psymb}
\renewcommand{\Pr}{\ProbOp}
\newcommand\bdot\bullet
\DeclareMathOperator{\poly}{poly}
\newcommand{\Erdos}{Erd\H{o}s\xspace}
\newcommand{\Renyi}{R\'enyi\xspace}
\newcommand{\R}{\mathbb R}
\newcommand{\problemmacro}[1]{\texorpdfstring{\textup{\textsc{#1}}}{#1}\xspace}
\newcommand{\maxclique}{\problemmacro{clique}}
\newcommand{\maxbalancedbiclique}{\problemmacro{balanced biclique}}
\newcommand{\cG}{\mathcal G}
\renewcommand{\leq}{\leqslant}
\renewcommand{\le}{\leqslant}
\renewcommand{\geq}{\geqslant}
\renewcommand{\ge}{\geqslant}
\let\epsilon=\varepsilon
\numberwithin{equation}{section}
\newcommand\MYcurrentlabel{xxx}
\newcommand{\MYstore}[2]{%
  \global\expandafter \def \csname MYMEMORY #1 \endcsname{#2}%
}
\newcommand{\MYload}[1]{%
  \csname MYMEMORY #1 \endcsname%
}
\newcommand{\MYnewlabel}[1]{%
  \renewcommand\MYcurrentlabel{#1}%
  \MYoldlabel{#1}%
}
\newcommand{\MYdummylabel}[1]{}
\newcommand{\torestate}[1]{%
  \let\MYoldlabel\label%
  \let\label\MYnewlabel%
  #1%
  \MYstore{\MYcurrentlabel}{#1}%
  \let\label\MYoldlabel%
}
\newcommand{\restatedef}[1]{%
  \let\MYoldlabel\label
  \let\label\MYdummylabel
  \begin{definition*}[Restatement of \cref{#1}]
    \MYload{#1}
  \end{definition*}
  \let\label\MYoldlabel
}
\newcommand{\restatetheorem}[1]{%
  \let\MYoldlabel\label
  \let\label\MYdummylabel
  \begin{theorem*}[Restatement of \cref{#1}]
    \MYload{#1}
  \end{theorem*}
  \let\label\MYoldlabel
}
\newcommand{\restatelemma}[1]{%
  \let\MYoldlabel\label
  \let\label\MYdummylabel
  \begin{lemma*}[Restatement of \cref{#1}]
    \MYload{#1}
  \end{lemma*}
  \let\label\MYoldlabel
}
\newcommand{\restateprop}[1]{%
  \let\MYoldlabel\label
  \let\label\MYdummylabel
  \begin{proposition*}[Restatement of \cref{#1}]
    \MYload{#1}
  \end{proposition*}
  \let\label\MYoldlabel
}
\newcommand{\restatefact}[1]{%
  \let\MYoldlabel\label
  \let\label\MYdummylabel
  \begin{fact*}[Restatement of \cref{#1}]
    \MYload{#1}
  \end{fact*}
  \let\label\MYoldlabel
}
\newcommand{\restate}[1]{%
  \let\MYoldlabel\label
  \let\label\MYdummylabel
  \MYload{#1}
  \let\label\MYoldlabel
}
\newcommand{\eps}{\epsilon}
\newcommand{\tom}[1]{\textcolor{WildStrawberry}{[Tommaso: #1]}}
\newcommand{\francesco}[1]{\textcolor{Cerulean}{[Francesco: #1]}}
\newcommand{\marco}[1]{\textcolor{Plum}{[Marco: #1]}}
\newcommand{\tom}[1]{}
\newcommand{\francesco}[1]{}
\newcommand{\marco}[1]{}
\newcommand{\om}{\om}
\newcommand{\scG}{\mathcal{G}}
\newcommand{\scA}{\mathcal{A}}
\newcommand{\AlgoName}[1]{\ensuremath{\textsc{#1}}}
\newcommand{\DenseCliqueFinder}{\AlgoName{DenseCliqueFinder}}
\newcommand{\DenseBicliqueFinder}{\AlgoName{BicliqueExtractor}}
\newcommand{\CliqueFinder}{\AlgoName{CliqueFinder}}
\newcommand{\BalancedBicliqueFinder}{\AlgoName{BalancedBicliqueFinder}}
\newcommand{\ccNP}{\mathsf{NP}}
\newcommand{\ccBPP}{\mathsf{BPP}}
\newcommand{\ignore}[1]{}
\begin{document}
\date{}

\title{
On Finding Randomly Planted Cliques in Arbitrary Graphs
\ifnum\draft=1 {\sc \small \\ Working Draft: Please Do Not Distribute} \fi
}

\author{
Francesco Agrimonti
\and
Marco Bressan
\thanks{Università degli Studi di Milano. 
}
\and
Tommaso D'Orsi
\thanks{Bocconi University. 
}
}

\maketitle
\begin{abstract}
We study a planted clique model introduced by Feige \cite{roughgarden} where a complete graph of size $c\cdot n$ is planted uniformly at random in an arbitrary $n$-vertex graph.
We give a simple deterministic algorithm that, in almost linear time, recovers a clique of size $(c/3)^{O(1/c)} \cdot n$ as long as the original graph has maximum degree at most $(1-p)n$ for some fixed $p>0$.
The proof hinges on showing that the degrees of the final graph are correlated with the planted clique, in a way similar to (but more intricate than) the classical $G(n,\nicefrac{1}{2})+K_{\sqrt{n}}$ planted clique model.
Our algorithm suggests a separation from the worst-case model, where, assuming the Unique Games Conjecture, no polynomial algorithm can find cliques of size $\Omega(n)$ for every fixed $c>0$, even if the input graph has maximum degree $(1-p)n$.
Our techniques extend beyond the planted clique model.
For example, when the planted graph is a balanced biclique, we recover a balanced biclique of size larger than the best guarantees known for the worst case.
\end{abstract}

\ignore{
\textbf{TODO}
\begin{enumerate}\itemsep0pt
    \item controllare $\ln$ versus $\log$, usiamo entrambi, non sono sicuro sia tutto ok
    \item if $G$ has a $c$-bulging set of size $t$ then whp some prefix $v_1,\ldots,v_j$ of the vertices of $\hat G$ contains a clique of size $\frac{2j}{3}$ (resp.\ a biclique of side $\frac{j}{3}$).
    \item algorithms for dense clique/biclique
\end{enumerate}
}

\setcounter{page}{1}
\section{Introduction}\label{sec:introduction}
Finding large cliques in a graph is a notoriously hard problem. 
Its decision version was among the first problems shown to be $\ccNP$-complete \cite{karp}. In fact, it turns out that for any $\eps>0$ it is $\ccNP$-hard to find a clique of size $n^{\eps}$ even in graphs containing cliques of size $n^{1-\eps}$ \cite{hastad,zuckerman, khot_clique_inappr}.

A large body of work \cite{CLRS, halld_stillbetter,  alonkahale, kargermotwanisudan, feige, bopphalld, karakostas} focused on designing  polynomial time algorithms to find large cliques  given an $n$-vertex graph containing a clique of size $cn\,.$
When $c<1/\log n$, the best algorithm known only returns a clique of size $\Tilde{O}(\log(n)^{3})$ \cite{feige}. For larger values of $c$ it is possible to find a clique of size $O(cn)^{O(c)}$, which is of order $n^{\Omega(1)}$ when the largest clique in the graph contains a constant fraction of the vertices \cite{bopphalld, alonkahale}.
The current algorithmic landscape further suggests a phase-transition phenomenon around $c=\tfrac{1}{2}$. For sufficiently small $\eps>0$ and $c= \tfrac{1}{2}-\eps$ there exists an algorithm finding a clique of size $n^{1-O(\eps)}$ \cite{kargermotwanisudan}. Instead, when $c= \tfrac{1}{2}+\eps$, one can  efficiently find a complete graph of size $2\eps n$ via a reduction to the classical 2-approximation algorithm for vertex cover.
Finally, finding a clique of size $\Omega(\eps n)$ for $c=\tfrac{1}{2}-\eps$ was shown to be UGC-hard in \cite{khotregev,khotbansal}.

\begin{table}[H]
    \renewcommand{\arraystretch}{1.2}
    \centering
    \begin{tabular}{|l|l|l|}
    \toprule
        Regime & Output clique & References \\
    \midrule
         $c \ge \tfrac{1}{2} + \eps$ & $2 \epsilon n$ & \cite{CLRS}\\
         $c\ge \Omega(1)$ & $(cn)^{\Omega(c)}$ & \cite{alonkahale} \\
         $c \ge 1/\log n$ & $\Omega \left( \frac{n^{c}}{c} \right) $  & \cite{bopphalld, halld_stillbetter} \\
      any $c>0$ & $\Omega\left(\frac{\log^3 (cn)}{\log^2\log (cn)}\right)$ & \cite{feige}\\
        \bottomrule
    \end{tabular}
    \caption{\small Performance of state-of-the-art efficient algorithms for \maxclique\ when a clique of size $cn$ exists in the graph (note that $c$ can depend on $n$).}
    \label{tab:SOA_clique}
\end{table}

Given the grim worst-case picture, a substantial body of work has focused on designing algorithms that perform well under structural or distributional assumptions on the input graph. One research direction has investigated \maxclique and related problems on graphs satisfying expansion or colorability properties \cite{arora2011new, david2016effect, kumar2018finding, bafna2024rounding}.
Another line of work has explored  planted average case models \cite{karp,jerrum, kuvcera1995expected, alon1998finding, feige2000finding, feige2001heuristics, coja2003, feigeofek}. In the planted clique model, the input graph is generated by sampling a graph from the \Erdos-\Renyi distribution $\textnormal{ER}(n,\tfrac{1}{2})$ and then embedding a clique of size $cn$ by fully connecting a randomly chosen subset of vertices.
Here,  basic semidefinite programming relaxations \cite{feige2000finding, feige2003probable}, as well a simple rounding of the second smallest eigenvector of the Laplacian \cite{alon1998finding}, are known to efficiently recover the planted clique whenever $c\geq 1/\sqrt{n}\,.$
Lower bounds against restricted computational models further provide evidence that these algorithmic guarantees may be tight \cite{feldman2017statistical, barak2019nearly}.

\bigskip
In an effort to bridge the worst-case settings and the average-case settings, Feige and Kilian \cite{feige2001heuristics} introduced a semi-random model in which the above planted clique instance is further perturbed by: \textit{(i)} arbitrarily removing edges between the planted clique $K$ and the remainder of the graph $G\setminus K$, and \textit{(ii)} arbitrarily modifying the subgraph induced by $G\setminus K$ .  The randomness of this model lies in the cut $(K,G\setminus K)$ which separates the clique from the rest of the graph.  A flurry of works \cite{charikar2017learning, mckenzie2020new, buhai2023algorithms} led to an algorithm that, leveraging the randomness of this cut, can recover a planted clique of size $n^{\frac{1}{2}+\eps}$ in time $n^{O(1/\eps)}\,.$ This picture suggests that, from a  computational perspective, this semi-random model may be closer to the planted average case model than to worst-case graphs. (Information theoretically the semi-random model differs drastically from the planted clique model \cite{steinhardt2017does}.)

To better understand the role of randomness in the \maxclique problem, Feige  \cite{roughgarden} proposed another model in which a clique is randomly planted in an \textit{arbitrary} graph, and asked what approximation guarantees are efficiently achievable in this setting.
In comparison to the aforementioned semi-random case, here the randomness only affects  the location of the clique but not the topology of the rest of the graph. 

Investigating this model is the main focus of this paper. 
We provide a first positive answer to Feige's question, showing that a surprisingly simple deterministic algorithm achieves significantly stronger guarantees than those known for the worst-case settings, for a wide range of parameters.
Our results suggest that this model may sit in between the  average case and  the worst case regimes.





\subsection{Results}\label{sec:results}
To present our contributions we first formally state our random planting model. In fact, as our results extend beyond \maxclique, the model we state is a generalization of the one in~\cite{roughgarden}.

\begin{definition}[Random planting in arbitrary graphs]\label{def:model}
Let $G$ and $H$ be graphs with $|V(H)| \le |V(G)|$. $\scG(G,H)$ describes the following distribution over graphs:
\begin{enumerate}
    \item Sample a random uniform injective mapping $\phi : V(H) \to V(G)$.
    \item Return $\hat{G}$ with $V(\hat G)=V(G)$ and  $E(\hat G) =E(G) \cup \Set{\set{\phi(u),\phi(u')}\,:\,\set{u,u'}\in E(H)}\,.$
\end{enumerate}
\end{definition}
\noindent When $H$ is the $cn$-sized complete graph $K_{cn}$, \cref{def:model} corresponds to the planted clique model of~\cite{roughgarden}.
In this specific setting we obtain the following result.

\begin{theorem}[Simplified version]\label{thm:main-clique}
There exists a deterministic algorithm $\scA$ with the following guarantees.
For every $c \in (0,1)$ and every $n$-vertex graph $G$, if $\hat{G}\sim \cG\Paren{G,K_{cn}}$ then $\scA(\hat G)$ with probability at least $1-\tfrac{1}{n^2}$ returns  a clique of size at least:
\begin{align*}
     \frac{n}{5} \cdot \Paren{\frac{c}{3}}^{\tfrac{4}{c}\log\tfrac{2}{p}}
\end{align*}
where $p = 1- \frac{\Delta}{n}$ and $\Delta$ is the maximum degree of $G$.
Moreover $\scA$ runs in time $\Tilde{O}\big(\lVert \hat G \rVert\big)$.
\ignore{
\marco{Vecchio enunciato con $1-p$ al posto di $p$.}
There exists a deterministic polynomial-time algorithm $\scA$ with the following guarantees.
For every $c \in (0,1)$, every $p \ge \nicefrac{1}{2}$, every $n$ sufficiently, and every $n$-vertex graph $G$ with maximum degree at most $pn-1$, if $\hat{G}\sim \cG\Paren{G,K_{cn}}$ then $\scA(\hat G)$ returns a clique of size at least:
\begin{align*}
    \frac{n}{5}\cdot \Paren{\frac{c}{3}}^{2+\tfrac{2}{c}\log\tfrac{1}{1-p}}\,,
    
\end{align*}
}
\end{theorem}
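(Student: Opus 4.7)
The plan centers on the degree statistics of $\hat G$. A non-clique vertex $u$ has $\deg_{\hat G}(u)=\deg_G(u)\le (1-p)n$ since the planted edges lie entirely inside $K$. A clique vertex $v\in K$ satisfies
\[
\deg_{\hat G}(v)=\deg_G(v)+(cn-1)-|K\cap N_G(v)|.
\]
Since $K$ is a uniformly random $cn$-subset, $|K\cap N_G(v)|$ concentrates around $c\,\deg_G(v)$ with fluctuations of order $\sqrt{\deg_G(v)}$, so typical clique vertices carry a degree surplus of about $cpn$ over what the base graph alone would give. Standard Chernoff-type bounds for sampling without replacement plus a union bound over $v$ make this signal pointwise, holding with probability at least $1-n^{-\Omega(1)}$, and an analogous concentration handles the trivial lower bound $\deg_{\hat G}(v)\ge cn-1$ for every $v\in K$.

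Building on this, I would sort the vertices of $\hat G$ by decreasing degree as $v_1,\ldots,v_n$ and argue that some prefix $v_1,\ldots,v_j$ is enriched in $K$. Concretely, the target is a prefix in which $K$-vertices form a constant fraction (say $\tfrac{2}{3}$). The enrichment comes from the surplus argument above: the extra degree that populates the top positions can only be accounted for by clique vertices, since non-clique vertices are capped at $(1-p)n$. Once such a prefix $S$ is extracted, I pass to $\hat G[S]$ and repeat the sort-and-prefix step, each round multiplying the clique density by a constant factor greater than one and shrinking the set by a factor of roughly $c/3$. After $O\bigl((1/c)\log(1/p)\bigr)$ rounds the density has risen to $1-o(1)$, and a final greedy peeling of low-degree vertices returns a clique whose size matches the claimed $n\cdot (c/3)^{O((1/c)\log(2/p))}$ bound.

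The main obstacle is the recursion itself: once we pass to $\hat G[S]$, the base graph $G[S]$ no longer satisfies the hypothesis $\Delta\le (1-p)n$ relative to $|S|$, so the clean bound "non-clique degree $\le (1-p)|S|$" breaks. A dedicated lemma is needed that, once the clique fraction in $S$ exceeds some constant threshold, the degree signal inside $\hat G[S]$ stays strong enough to continue boosting the fraction, independently of the structure of $G[S]$. Designing this lemma and stitching together the high-probability guarantees and almost-linear running time across all iterations, so that the total failure probability is below $1/n^2$ and the work telescopes to $\tilde O(\|\hat G\|)$, is where the bulk of the technical effort will go.
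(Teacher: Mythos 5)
Your high-level intuition — that planting a random clique boosts the degrees of the planted vertices by roughly $c\,s_v$, and that sorting by degree should concentrate clique vertices in a prefix — is the same starting point as the paper. But two substantial gaps remain.

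First, the claim that ``the extra degree that populates the top positions can only be accounted for by clique vertices, since non-clique vertices are capped at $(1-p)n$'' does not hold as stated. Clique vertices only gain about $c\,s_v$ new neighbors, so a clique vertex with, say, slack $\approx n$ in $G$ ends with degree $\approx cn$ in $\hat G$, which is typically well below the $(1-p)n$ cap. In general the top of the degree ordering may contain no clique vertices at all. The paper's fix is the notion of a \emph{bulging set} (Definition~\ref{def:good_set}): a set $U$ whose members all have \emph{nearly the same} slack (so the relative boost $c\,s_v$ pushes them all past the same threshold simultaneously), and which is not swamped by naturally low-slack vertices of $G$. Lemma~\ref{lem:structural} guarantees such a set of the right size always exists when $\Delta\le(1-p)n$, and Lemma~\ref{lem:good_set_densifies} shows it produces a prefix with clique density strictly above $\tfrac12$. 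You do not prove, and cannot without something like the bulging-set dichotomy, that a prefix of density $\tfrac23$ exists.

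Second, the recursive amplification is both unnecessary and, as you yourself flag, broken. It is unnecessary because once the density of a prefix exceeds $\tfrac12+\eps$, a single application of the vertex-cover $2$-approximation (Theorem~\ref{thm:dense_clique_finder}) already extracts a clique of size $2\eps n$ — the paper invokes exactly this and stops. And it is broken because once you pass to $\hat G[S]$, (a) the degree bound $\Delta\le(1-p)|S|$ on the base graph is lost, and (b) the conditioning on ``these are the top-degree vertices'' destroys the uniformity of the planting, so the concentration bounds no longer apply; the ``dedicated lemma'' you describe would have to recover both, and I see no route to it. The shape of the exponent $\left(\tfrac{c}{3}\right)^{O(\tfrac1c\log\tfrac1p)}$ in the paper comes not from a count of boosting rounds but from the geometric slack-bucketing in the proof of Lemma~\ref{lem:structural}, which bounds the size of the guaranteed bulging set from below in a single non-iterative argument.
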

To appreciate the guarantees of \cref{thm:main-clique} consider the setting $p = \Omega(1)$, so that $\Delta=(1-p)n$ is bounded away from $n$.
In this case, for \emph{every} fixed $c > 0$ the algorithm of \cref{thm:main-clique} finds with high probability a clique of size $\Omega(n)$.
In the worst case, however, this is not possible unless the Unique Games Conjecture fails.
More precisely, assuming UGC, no polynomial-time algorithm can find a clique of size $\eps\cdot n$ even when one of size $\Paren{\tfrac{1}{2}-\eps} \cdot n$ exists~\cite{khotregev,khotbansal}; indeed, state-of-the-art algorithms \cite{alonkahale,bopphalld, halld_stillbetter} are only known to return cliques of size $n^{O(c)}$.
By adding $n\frac{p}{1-p}$ isolated vertices to the graph, it also follows that under UGC one cannot efficiently find a clique of size $\eps \cdot n$ even when one of size $\Paren{\tfrac{1-p}{2}-\eps} \cdot n$ exists \emph{and} the input graph has degree $\Delta \le (1-p)n$, as in the statement of \cref{thm:main-clique}.
Thus, unless UGC fails, we cannot expect \cref{thm:main-clique} to hold in the worst case.
We remark that \cref{thm:main-clique} also guarantees to recover cliques of size $n^{\Omega(1)}$ for $c\geq \Omega\Paren{\tfrac{\log\log n}{\log n}}$, a regime in which worst-case algorithms are only known to find cliques of size $\poly\log(n)$.

Note that the performance of our algorithm deteriorates as $p$ approaches $0$; that is, as the maximum degree approaches $n$. While it remains an open question whether some assumption on the degree is inherently necessary, we provide some preliminary evidence in \cref{thm:densification_LB}, see \cref{sec:techniques} and \cref{sec:lower-bound}.
Finally, as one can expect, the failure probability can be actually made smaller than $n^{-a}$ for any desired $a \ge 1$; see the full formal version of \cref{thm:main-clique} in \cref{sec:clique}.




\bigskip
Our results extend beyond the case where the planted graph is a complete graph. To illustrate this, we also consider the \maxbalancedbiclique problem, where the goal is to find a largest complete balanced bipartite subgraph.
The \maxbalancedbiclique problem has a long history \cite{gareyjohnson, johnson_column, alon_algo_aspects_regularity_lemma} and a strong connection to \maxclique \cite{chalermsook}.
Assuming the Small Set Expansion Hypothesis, there is no polynomial-time algorithm that can find a balanced biclique within a factor $n^{1-\eps}$ of
the optimum for every $\eps>0$, unless $\ccNP\subseteq \ccBPP$ \cite{manurangsi_complessita_biclique}. Remarkably, in the worst case, the bicliques that existing algorithms are known to return are  significantly smaller than the complete graphs found in the context of \maxclique.
In fact, the best algorithm known \cite{chalermsook} works through a reduction to \maxclique which constructs an instance with a complete graph of size $O(c^2\cdot n)$ from a \maxbalancedbiclique instance with a biclique of size $c\cdot n\,.$

In comparison, under \cref{def:model}, we obtain the following guarantees.

\begin{theorem}[Simplified version]\label{thm:main-biclique}
There exists a deterministic polynomial-time algorithm $\scA$ with the following guarantees.
For every $c \in (0,1)$ and every $n$-vertex graph $G$,
if $\hat{G}\sim \cG\Paren{G,K_{\frac{cn}{2}, \frac{cn}{2}}}$ then $\scA(\hat G)$ with probability at least $1-\tfrac{1}{n^{2}}$ returns a balanced biclique of size at least:
\begin{align*}
    \frac{c}{48} \cdot 2^{ \sqrt{\frac{c \log n}{2}} }.
\end{align*}
Moreover $\scA(\hat G)$ runs in time $\Tilde{O}(\lVert \hat G \rVert)$.
\end{theorem}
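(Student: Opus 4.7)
The plan is to exploit the combinatorial fact that, after the planting, any pair of vertices both in $\phi(L)$ (respectively both in $\phi(R)$) shares at least $|R|=|L|=cn/2$ common neighbors in $\hat G$, because each such vertex is adjacent in $\hat G$ to the entire opposite side of the planted biclique. The algorithm grows a biclique $(A,B)$ using this invariant: as long as $A\subseteq \phi(L)$ and $B\subseteq \phi(R)$, every common neighbor of the vertices in $B$ is a valid extension of $A$ (and symmetrically), and the pair $(A,B)$ is automatically a biclique in $\hat G$.

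\paragraph{Warm-up: logarithmic biclique via alternating sampling}
As a first step I would present a simple procedure achieving biclique side $\Theta(\log n/\log(2/c))$. Sample $v_1\in V(\hat G)$ uniformly, and then iteratively pick $v_{i+1}$ uniformly from $N_{\hat G}(v_i)\setminus \{v_1,\dots,v_i\}$. Conditioning on the event that odd-indexed picks fall in $\phi(L)$ and even-indexed picks in $\phi(R)$---an event of probability at least $(c/2)^{2m}$ per run, since when $v_i\in \phi(R)$ the intersection $\phi(L)\cap N(v_i)=\phi(L)$ contributes at least $cn/2 \geq (c/2)|N(v_i)|$ candidates---the sets $\{v_1,v_3,\dots\}$ and $\{v_2,v_4,\dots\}$ form a balanced biclique $K_{m,m}$ by the planted edges alone. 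Independent repetition $n^{O(1)}$ times succeeds with high probability and yields a biclique of side $m=\Theta(\log n/\log(2/c))$ in polynomial time.

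\paragraph{Boosting to $2^{\sqrt{c\log n/2}}$ via iterative enlargement}
To reach the sub-polynomial bound claimed in the theorem, I would iterate the following enlargement step. Given a biclique $(A_t,B_t)\subseteq (\phi(L),\phi(R))$ of current side $k_t$, consider $\widetilde B_t:=\bigcap_{v\in A_t}N(v)$, which contains the entirety of $\phi(R)$ and so has size at least $cn/2$; by construction $(A_t,\widetilde B_t)$ is a biclique. Define $\widetilde A_t$ symmetrically. Inside these oversized candidate pools, group vertices by their adjacency pattern to the current anchor $A_t\cup B_t$; by a pigeonhole argument, a substantial subfamily of $\widetilde A_t$ (respectively $\widetilde B_t$) behaves like authentic $\phi(L)$ (respectively $\phi(R)$) vertices, allowing both sides of the biclique to be extended simultaneously. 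If tuned so that the biclique side roughly doubles while the per-round failure probability remains $1/n^{O(1/\sqrt{\log n})}$, then after $\Theta(\sqrt{c\log n/2})$ rounds the side reaches the advertised $2^{\sqrt{c\log n/2}}$, and the overall failure probability is driven to $1/n^2$ by amplification.

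\paragraph{Main obstacle}
The main technical difficulty, and what distinguishes this argument from the clique analysis behind \cref{thm:main-clique}, is the absence of any maximum-degree assumption on $G$. Without such a bound, the algorithm cannot use local degree or co-degree statistics to distinguish planted from non-planted vertices; it must rely entirely on the uniform randomness of $\phi$ to argue that each iterative enlargement step does not drift out of $(\phi(L),\phi(R))$. Controlling this drift across $\Theta(\sqrt{\log n})$ rounds---while simultaneously amplifying the biclique side exponentially in the round index---is the delicate point. The precise trade-off between round count, per-round growth, and per-round failure probability is what produces the sub-polynomial $2^{\sqrt{c\log n/2}}$ bound rather than a polynomial one, and is consistent with the fact that, for arbitrary $G$ and no degree restriction, a fully linear-size biclique is presumably out of reach of this technique.
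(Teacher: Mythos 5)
Your proposal takes a genuinely different route from the paper, and unfortunately the route has a gap that I do not think can be filled as described.

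The paper's algorithm is entirely degree-based, which directly contradicts your stated main obstacle (``the algorithm cannot use local degree or co-degree statistics''). The paper's key insight is a dichotomy coming from \cref{lem:structural}: for any $\alpha,s$, either $G$ has many vertices of slack below $s$ (equivalently, many high-degree vertices), or $G$ contains a large $(\alpha,\tfrac{c}{2})$-bulging set. Phase 1 of \cref{alg:mainbiclique} handles the first branch directly: if $f$ vertices have slack $<s\le n/f$, then any $\lfloor 2f/3\rfloor$ of them share at least $\lfloor 2f/3 \rfloor$ common neighbors, giving a biclique for free, with no randomness at all. Phase 2 handles the second branch via the densification phenomenon of \cref{lem:good_set_densifies}: vertices of the bulging set that land in the planted biclique lose a constant fraction of their slack, so with high probability a constant fraction of the top-$|\hat V_{<s_U}|$ vertices by degree belong to the plant; a pigeonhole then produces a contiguous block of $\Omega(c\alpha)$ planted vertices in the degree order, on which \cref{lem:bicliquepoly} (a deterministic connected-components-of-complement extraction) returns a balanced biclique. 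The absence of a max-degree hypothesis is not an obstacle to the degree-based approach: it is \emph{handled} by Phase 1, since a violation of the degree bound simply means many high-degree vertices exist. This dichotomy is precisely what lets the biclique theorem dispense with the hypothesis that \cref{thm:main-clique} still needs.

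The concrete gap in your boosting step is the claim that grouping vertices of $\widetilde B_t := \bigcap_{v\in A_t} N(v)$ by adjacency pattern to $A_t\cup B_t$ isolates ``authentic'' $\phi(R)$ vertices. Every vertex of $\widetilde B_t$ is adjacent to all of $A_t$ by definition, so the pattern on $A_t$ carries no information. The pattern on $B_t$ does not help either: $G$ is adversarial, so the induced graph on $\phi(R)$ can be arbitrary, which means planted vertices of $\widetilde B_t\cap\phi(R)$ can have \emph{any} adjacency pattern to $B_t\subseteq\phi(R)$ and need not concentrate in a single pigeonhole class. Worse, a pigeonhole class can consist entirely of non-planted vertices, and nothing in the argument prevents the enlargement from drifting out of $\phi(L)\times\phi(R)$. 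You acknowledge this drift as the delicate point, but the proposal offers no mechanism to control it, and the adversarial structure of $G$ is exactly what defeats the kind of correlation/majority-vote argument that works in $G(n,\tfrac12)+K_{\sqrt n}$. Finally, both your warm-up (random alternating walk) and the amplification step are randomized, whereas the theorem claims a deterministic algorithm; the paper's algorithm is deterministic because all the randomness lives in the planting, not in the algorithm.
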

The main point of \cref{thm:main-biclique} is again the difference with the worst case bounds.
In the worst case, existing algorithms are known to find a biclique of size $(\log n)^{\omega(1)}$ only if there exists one of size $c\cdot n\geq \omega\Paren{\tfrac{\log\log n}{\sqrt{\log n}}}\cdot n$ in the input graph.
In contrast, \cref{thm:main-biclique} states that in typical instances from  $\cG\Paren{G,K_{\frac{cn}{2}, \frac{cn}{2}}}$ we can efficiently find a biclique of size $(\log n)^{\omega(1)}$ whenever there exists one of size $c\cdot n\geq \omega\Paren{\tfrac{\log^2\log n}{\log n}}\cdot n$; that is, for value of $c$ up to $\frac{\log\log n}{\sqrt{\log n}}$ times smaller than for the worst case.
Furthermore, unlike the bounds of \cref{thm:main-clique}, the ones of \cref{thm:main-biclique} are insensitive to the structure of $G$, and in particular to its maximum degree.

\section{Techniques}\label{sec:techniques}
This section gives an intuitive description of our techniques, using the planted clique problem as a running example.
Let $G$ be an arbitrary $n$-vertex graph, and let $\hat{G}\sim \cG(G,K_{cn})$.
For simplicity, we suppose that $c>0$ and $p>0$ are fixed constants, and that $G$ has maximum degree $\Delta \le (1-p)n$. Let $\phi\,:V(K_{cn})\to V(G)$ be the injective mapping sampled in the process of constructing $\hat{G}\,.$ 
Because we have almost no knowledge of the global structure of $G\,,$ it appears difficult to recover the planted clique via the topology of $\hat{G}$  without running into any of the barriers observed in worst-case instances.

On the other hand, since the clique is planted randomly, we can expect certain basic statistics to change in a convenient and somewhat predictable way between 
$G$ and $\hat G\,.$
Our approach focuses on perhaps the simplest such statistic---the degree profile---guided by the intuition that vertices with higher degree in $\hat G$ are more likely to belong to the planted clique than those with lower degree. For notational convenience we use the degree in the complement graph, which we call \emph{slack}.
To be precise, for a vertex $v \in V(G)$, the slack of $v$ in $G$ is $s_v=(n-1)-d_v$ where $d_v$ is the degree of $v$ in $G$.
In the same way we define the slack of $v$ in $\hat G$ as $\hat s_v=(n-1)- \hat d_v$, where $\hat d_v$ is the degree of $v$ in $\hat G$.

To formalize the intuition above, suppose  $G$ contains a subset $V'\subseteq V$ such that \textit{(i)} the vertices of $V'$ have approximately the same slack, in the sense that if $s:=\min_{v\in V'} s_v$, then any $v\in V'$ satisfies
\begin{align*}
    s_v \Paren{1-\tfrac{c}{2}} < s\,,
\end{align*}
and \textit{(ii)} the set $V_{< s}(G)$ of vertices in $G$ with slack smaller than $s$ has size at most, say, $\tfrac{c}{10}\cdot\card{V'}\,.$
Because the map $\phi$ is chosen uniformly at random, we expect a $c$ fraction of $V'$ will be in the image of $\phi\,.$ Furthermore, every $v\in V'$ in the image of $\phi$ acquires $c s_v$ new neighbors in expectation, which by \textit{(i)} gives:
\begin{align*}
    \E_\phi \Brac{\hat s_v}\leq s_v \cdot (1 - c) < s\,.
\end{align*}
In fact, as long as $\card{V'}$ and $s$ are large enough (roughly $\Omega(c^-1 \log n)$), by standard concentration bounds at least $\frac{c}{2} |V'|$ vertices of $V'$ will be in the image of $\phi$, and all those vertices $v$ will satisfy $\hat s_v < s$.
Under these circumstances, by \textit{(ii)} we conclude that, in $\hat G$, the set $\card{V_{< s}(\hat G)}$ of vertices having slack smaller than $s$ has size at least $\frac{c}{10}|V'|$, and moreover a fraction at least $\frac{\nicefrac{1}{2}}{\nicefrac{1}{2}+\nicefrac{1}{10}} > 0.8$ of those vertices form a clique.
We can then immediately recover a clique of size $\Omega(c\card{V'})$ via the standard reduction to vertex cover applied to the subgraph of $\hat{G}$ induced by $\card{V_{< s}(\hat G)}\,.$

The above discussion suggests our intuition is correct whenever a sufficiently large set satisfying \textit{(i)} and \textit{(ii)} exists.\footnote{We remark that our algorithm does not need to find this set.}
While arbitrary graphs may not contain such a set, it turns out that the only obstacle towards the existence of a linear size set $V'$ is the presence of a large set of vertices of slack strictly smaller than $s$. Choosing $V'$ so that $s \le p\cdot n$ we deduce that such a $V'$ must exists.

\begin{remark}
    The above reasoning works beyond the parameters regime of our example and, in fact, does not require the planted graph to be a clique. In the context of \maxbalancedbiclique the existence of a large set with slack $<s$ makes the problem easier. Therefore, we are able to drop the assumption on the maximum degree in $G\,.$
\end{remark}

We complement the intuition above with a lower bound on the performance of degree profiling.
Essentially this states that, if we have no guarantees on the maximum degree of $G$, then the degree profile of $\hat G \sim \scG(G,K_{cn})$ is uncorrelated with $K_{cn}$.
Formally:
\begin{theorem}\label{thm:densification_LB}
For every $c \in (0,\frac{1}{2})$ and $n \ge 3$ there exists an $n$-vertex graph $G$ such that $\hat G \sim \scG(G,K_{cn})$ satisfies what follows with probability at least $1-\frac{1}{n}$.
For every ordering $v_1,\ldots,v_n$ of the vertices of $\hat G$ by nonincreasing degree, and for every $j\in [n]$, the largest clique in the induced subgraph $\hat G\brac{\set{v_1,\ldots,v_j}}$ has size at most 
\begin{align}\label{eq:LB_clique_bound}
    O\left( \frac{\sqrt{n \ln n}}{c} + c \, j\right)\,.
\end{align}
\end{theorem}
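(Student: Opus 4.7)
The plan is to construct a specific $n$-vertex graph $G$ with two properties: (a) a small clique number, say $\omega(G) = O(\log n)$; and (b) a degree sequence spread widely enough that, after planting, image vertices are not systematically concentrated at the top of the degree ordering of $\hat G$. A natural candidate is a bipartite ``staircase'' graph (e.g., a half-graph with $a_i \sim b_j$ iff $i + j \ge n/2$), possibly padded or tweaked to cover different regimes of $c$; for the regime of $c$ close to $\tfrac{1}{2}$ the bound $cj$ already dominates, so essentially any low-clique $G$ suffices there, and the harder regime is small $c$, for which we want degrees in $G$ to vary on a scale comparable to or larger than the planting-induced boost $c s_v$.

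Given such $G$, I would analyze $\hat G \sim \scG(G, K_{cn})$ in two main steps. \emph{First}, I write
\begin{equation*}
\hat d_v \;=\; d_v^G \;+\; \Ind[v \in \textnormal{Im}(\phi)]\cdot\bigl((cn - 1) - |N_G(v) \cap \textnormal{Im}(\phi)|\bigr),
\end{equation*}
and apply Chernoff--Hoeffding to the hypergeometric random variables $|N_G(v)\cap\textnormal{Im}(\phi)|$ together with a union bound over $v$, to conclude that with probability at least $1 - n^{-2}$ every $v$ satisfies $|N_G(v)\cap\textnormal{Im}(\phi)| = c\, d_v^G \pm O(\sqrt{d_v^G \log n})$. \emph{Second}, for each $j \in [n]$ and each admissible ordering, I observe that the prefix $T_j = \{v_1,\ldots,v_j\}$ is determined, up to tie-breaking, by a degree threshold on $\hat d$, so it belongs to a polynomial-sized family of candidate subsets (one per distinct threshold, times the combinatorial choice within each tie class, which the construction keeps small). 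For each fixed candidate $T \subseteq V(G)$ of size $j$, the intersection $|T \cap \textnormal{Im}(\phi)|$ is hypergeometric with mean $cj$ and variance $O(cn)$; Hoeffding and a union bound over all candidates give
\begin{equation*}
|T_j \cap \textnormal{Im}(\phi)| \;\le\; cj + O\bigl(\sqrt{cn\,\ln n}\bigr) \;\le\; cj + O\bigl(\sqrt{n\,\ln n}/c\bigr)
\end{equation*}
with probability $\ge 1 - n^{-1}$, where the last inequality uses $c \le 1$.

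To conclude, any clique $K$ in $\hat G[T_j]$ decomposes as $K = (K\cap\textnormal{Im}(\phi)) \cup (K\setminus\textnormal{Im}(\phi))$: the first part is automatically a clique by planting and has size at most $|T_j \cap \textnormal{Im}(\phi)|$, while the second is a clique in $G$ and so has size at most $\omega(G) = O(\log n)$. Combining gives $|K| = O(\sqrt{n\ln n}/c + cj)$, as claimed. The main obstacle is that the prefix $T_j$ is \emph{not} a fixed subset: it depends on $\phi$ through the $\hat d_v$'s, so plain hypergeometric concentration does not apply out of the box. We handle this by enumerating the polynomially many candidate prefixes (parameterized by threshold and tie-breaking) and union-bounding the concentration over this family; the union-bound cost is absorbed into the $\ln n$ factor of $\sqrt{n\ln n}/c$. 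A secondary technical issue is that the construction must keep each tied degree class in $\hat G$ polynomially small so that the candidate family really is polynomial-sized, which is why a ``staircase''-type $G$ with many distinct degree values is more convenient than, say, a regular or complete-bipartite graph.
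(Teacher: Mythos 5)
Your decomposition $|K| \le |T_j \cap \mathrm{Im}(\phi)| + \kappa(G)$ is the same one the paper uses, but both summands contain a genuine gap. The bipartite half-graph does not work: there every degree is at most about $n/2$, hence every slack is at least about $n/2$, and for a planted vertex $v$ one has $\hat s_v \approx (1-c)s_v$. So every planted vertex with $s_v \in [n/2,\, n/(2(1-c)))$ acquires a new slack strictly below $n/2$, i.e.\ below the slack of every non-planted vertex. That is $\Theta(c^2 n)$ planted vertices occupying the very top of the degree ordering, giving a clique of size $\Theta(c^2n)$ inside a prefix of length $\Theta(c^2n)$, which violates $O(\sqrt{n\ln n}/c + cj)$ with a universal constant once $c$ is a small fixed constant. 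This is exactly the slack gap the paper's construction avoids: its base graph $H$ (with $u\sim v$ iff $u+v\le n+1$) has slacks filling essentially all of $[0,n-1]$, so a boosted vertex only jumps over $\approx c\,s_v$ others, but such an $H$ necessarily has $\kappa(H)=\Theta(n)$. The paper then deletes each edge independently with probability $\eta=\Theta(c^{-1}\sqrt{\ln n/n})$, which kills all cliques of size $\Omega(\ln n/\eta)=\Omega(c\sqrt{n\ln n})$ while shifting the slack profile by only $O(\eta n + \sqrt{n\ln n})=O(\sqrt{n\ln n}/c)$. You cannot simultaneously have $\kappa(G)=O(\log n)$ and a gap-free slack profile without some such clique-destroying perturbation, so the $\kappa(G)=O(\log n)$ step of your argument is not attainable by the construction you propose (though a weaker bound $\kappa(G)=O(c\sqrt{n\ln n})$ would suffice and is what the perturbation achieves).

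The second gap is the union bound over ``candidate prefixes.'' The family $\{v : \hat d_v \ge t\}$ over thresholds $t$ is not a fixed, $\phi$-independent family: which vertices clear a given threshold is itself a function of $\phi$, so you would be union-bounding over a data-dependent collection, and the set of prefixes realizable over all $\phi$ has size comparable to $\binom{n}{j}$, not $\mathrm{poly}(n)$. The hypergeometric estimate $|T\cap\mathrm{Im}(\phi)|\approx cj$ holds for a \emph{fixed} $T$, but the degree ordering picks $T_j$ precisely so as to favor planted vertices, and (as the half-graph example shows) the bias can be total. The paper's remedy is a containment argument: with high probability, for every planted $v$ the event $\hat s_v\le s$ implies $s_v\le s^* := (s+\sqrt{2n\ln n})/(1-c)$, hence $K\cap \hat V_{\le s} \subseteq K\cap V_{\le s^*}$ where $V_{\le s^*}$ is a deterministic set depending only on $G$ and $s$; concentration is then applied to $|K\cap V_{\le s^*}|$ for only $n$ choices of $s$, and a short separate argument reduces general prefixes $\{v_1,\ldots,v_j\}$ to prefixes of the form $\hat V_{\le s}$ at the cost of an extra $O(\sqrt{n\ln n}/c)$ additive term. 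That containment step is the missing ingredient that makes the bound on $|T_j\cap\mathrm{Im}(\phi)|$ rigorous.
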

\noindent 
To appreciate \cref{thm:densification_LB} let $t = \Omega(c^{-2} \sqrt{n \ln n})$.
Then the theorem says that, if one takes the first $t$ vertices of $\hat G$ in order of degree, the largest clique therein has size $O(ct)$ with high probability.
In other words, for all $t$ not too small compared to $n$, the $t$ vertices of highest degree have roughly the same clique density of the entire graph.
This suggests that, using degree statistics alone, one has little hope to find cliques larger than $\tilde{O}(\sqrt{n})$ even for constant $c$.
Note that there is no contradiction with the upper bounds of \cref{thm:main-clique}: those bounds become trivial for large $\Delta$, and the graph behind the proof of \cref{thm:densification_LB} has indeed a large $\Delta$.


\section{Preliminaries}\label{sec:preliminaries}

Let $G$ be a graph.
We let $V(G)$ be its set of vertices and $E(G)$ its set of edges. 
For $V'\subseteq V(G)$ we let $G[V']$ be the subgraph induced by $V'\,.$
We often use $n=\card{V(G)}$.
We let $\lVert G \rVert = |V(G)|+|E(G)|$.
For $v\in V(G)\,,$ let $d_v$ be its degree and $s_v:=n-1-d_v$ its slack. For $V'\subseteq V(G)\,,$ let $s_{V'}:=\min_{v\in V'}s_v\,.$
We write $V_{<s}(G):=\set{v\in V(G)\,|\, s_v < s}\,.$ 
We do not specify the graph when the context is clear and we define $V_{< s} (\hat{G})$ as $\hat{V}_{<s}$.
We let $K_{n}$ be the complete graph of size $n$ and $K_{a,b}$ be the biclique with sides of size $a$ and $b$.
We let $[n]:=\set{1,\ldots, n}\,, $ $\log=\log_2$ and $\ln=\log_e$. 

The computational model is the standard RAM model with words of logarithmic size.
Unless otherwise stated, all our graphs are given as adjacency list.
By performing a $O(n)$ preprocessing we henceforth assume the adjacency lists are sorted, so that one can perform binary search and check the existence of any given edge in time $O(\log n)$.

\bigskip
The following theorem says that, for every fixed $c > \frac{1}{2}$, one can efficiently find a clique of size $\Omega (n)$ in an $n$-vertex graph that contains one of size $cn$.
\begin{theorem}\label{thm:dense_clique_finder}
    There exists an algorithm, \textnormal{\DenseCliqueFinder}, with the following guarantees. For every $\varepsilon > 0$, if $\scA$ is given in input an $n$-vertex graph $G = (V, E)$ that contains a clique of size $(\frac{1}{2} + \varepsilon)n$, then $\scA$ finds in deterministic $\tilde{O}(n^2)$-time a clique of size $2 \varepsilon n$.
\end{theorem}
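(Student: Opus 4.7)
The plan is to reduce the task to the classical $2$-approximation for \textsc{vertex cover} on the complement graph $\bar G$. If $K \subseteq V$ is a clique in $G$ with $|K| = (\tfrac{1}{2} + \varepsilon) n$, then $K$ is an independent set in $\bar G$, so $V \setminus K$ is a vertex cover of $\bar G$ of size $(\tfrac{1}{2} - \varepsilon) n$. In particular the minimum vertex cover of $\bar G$ has size at most $(\tfrac{1}{2} - \varepsilon) n$, and any $2$-approximation delivers a vertex cover of size at most $(1 - 2\varepsilon) n$, whose complement in $V$ is an independent set in $\bar G$, i.e., a clique in $G$, of size at least $2\varepsilon n$.

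Concretely, the algorithm \textsc{DenseCliqueFinder} would proceed as follows. First, construct $\bar G$ explicitly from the sorted adjacency lists of $G$; since $|V(\bar G)| = n$ this costs $O(n^2 \log n)$ time and $O(n^2)$ space. Second, compute a maximal matching $M$ in $\bar G$ in a single pass: initialize $M = \emptyset$, and for each edge $\{u,v\} \in E(\bar G)$, add $\{u,v\}$ to $M$ if neither $u$ nor $v$ is currently matched. This takes $O(|E(\bar G)|) = O(n^2)$ time. Third, let $C$ be the set of endpoints of $M$ and output $I := V \setminus C$.

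Correctness is the textbook argument: by maximality of $M$, every edge of $\bar G$ has at least one endpoint in $C$, so $C$ is a vertex cover of $\bar G$; and since any vertex cover must contain at least one endpoint of every matching edge, $|C| = 2|M| \le 2 \cdot (\tfrac{1}{2} - \varepsilon) n = (1 - 2\varepsilon) n$. Therefore $|I| \ge n - (1 - 2\varepsilon) n = 2\varepsilon n$, and $I$ contains no edge of $\bar G$, hence induces a clique in $G$.

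There is no real obstacle: the assumed clique of size $(\tfrac{1}{2}+\varepsilon)n$ is used only to upper bound the minimum vertex cover in $\bar G$, and the rest is the standard matching-based $2$-approximation. The only point worth checking is the $\tilde O(n^2)$ runtime when $G$ is given as adjacency lists, which is handled by materializing $\bar G$ once up front in $\tilde O(n^2)$ time and then running the greedy maximal-matching procedure in linear time in $|E(\bar G)|$.
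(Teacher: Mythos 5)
Your proposal is correct and matches the paper's own (folklore) argument exactly: pass to the complement graph, apply the matching-based $2$-approximation for vertex cover, and return the complement of the cover. The runtime accounting via materializing $\bar G$ in $\tilde O(n^2)$ time and then greedily computing a maximal matching is also the intended implementation.
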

\noindent The proof is folklore---take the complement of $G$, find a $2$-approximation of the smallest vertex cover through a maximal matching, and return its complement. See also \cite{CLRS}.

\section{Slackness profile and densification}\label{sec:slackness}

In this section we prove our structural results on the degree and slackness profile of graphs from \cref{def:model}.
We start with a definition.
\begin{definition}[Bulging set]\label{def:good_set}
Let $G=(V,E)$ be a graph and $\alpha,\beta > 0$.
A set $U \subseteq V$ is \emph{$(\alpha, \beta)$-bulging} if:
\begin{enumerate}\itemsep0pt
    \item $s_v < \frac{s_U}{1-\beta}$ for all $v \in U\,.$
    \item $|V_{< s_{U}}| < \frac{1}{\alpha} |U|\,.$
\end{enumerate}
\end{definition}

The next statement characterizes the existence of $(\alpha, \beta)$-bulging sets in \textit{any} graph based on the value of $\alpha$ and $\beta$ and the slackness  of its vertices.
\begin{lemma}\label{lem:structural}
Let $G=(V,E)$ be an $n$-vertex graph.
Then for every $\beta \in (0,\nicefrac{1}{2})$, $\alpha \ge 2$, and $s \in \R_{>0}$ at least one of the following facts holds:
\begin{enumerate}[(i)]
    \item $|V_{<s}| \ge \frac{n}{ \alpha ^{2 + \frac{1}{\beta}\log \frac{n}{s}}}$.
    \item $G$ contains an $(\alpha, \beta)$-bulging set $U$ such that $|U| \ge \frac{n}{\alpha^{1 + \frac{1}{\beta}\log \frac{n}{s}}}$ and $s_U \ge s$.
\end{enumerate}
\end{lemma}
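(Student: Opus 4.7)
The plan is to partition the vertices of $G$ by their slack into geometric layers, and then argue by contradiction that at least one layer must itself be a bulging set of the required size whenever case (i) fails.

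I would set $t_0 := s$ and $t_{k+1} := t_k / (1-\beta)$ for $k \geq 0$, and let $K := \lceil \frac{1}{\beta}\log(n/s) \rceil$. Because $-\log_2(1-\beta) \geq \beta$ for $\beta \in (0,\tfrac{1}{2}]$, one checks that $(1-\beta)^{-K} \geq n/s$, hence $t_K \geq n$, and since every slack is at most $n-1$ this gives $N_K := |V_{<t_K}| = n$. Define $L_k := \{v \in V : s_v \in [t_k, t_{k+1})\}$ and $N_k := |V_{<t_k}|$, so that $N_{k+1} = N_k + |L_k|$.

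The first observation is that each layer $L_k$ is a candidate bulging set with $s_{L_k} \geq t_k \geq s$: condition~(1) of \cref{def:good_set} is automatic since $s_v < t_{k+1} = t_k/(1-\beta) \leq s_{L_k}/(1-\beta)$ for every $v \in L_k$. Moreover, no vertex outside $L_k$ can have slack in $[t_k, s_{L_k})$, since any such vertex would itself belong to $L_k$ and contradict the minimality of $s_{L_k}$; hence $|V_{<s_{L_k}}| = N_k$. Consequently, $L_k$ is $(\alpha,\beta)$-bulging if and only if $|L_k| > \alpha N_k$, equivalently if and only if $N_{k+1} > (\alpha+1)\,N_k$.

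Next I would assume toward a contradiction that neither (i) nor (ii) holds, i.e., that $N_0 < n/\alpha^{2+K}$ and that every $L_k$ either satisfies $|L_k| \leq \alpha N_k$ (not bulging) or $|L_k| < n/\alpha^{1+K}$ (too small). In either case $|L_k| \leq \alpha N_k + n/\alpha^{1+K}$, so we obtain the one-step bound $N_{k+1} \leq (\alpha+1)\,N_k + n/\alpha^{1+K}$. Iterating this recurrence from $N_0$ up to $N_K$ and summing the resulting geometric series $\sum_{j=0}^{K-1}(\alpha+1)^j = ((\alpha+1)^K-1)/\alpha$ yields an upper bound on $N_K$ which, after substituting $N_0 < n/\alpha^{2+K}$, is strictly less than $n$, contradicting $N_K = n$ and forcing (i) or (ii).

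The main obstacle I expect in turning this sketch into a proof is the exponent bookkeeping in the final step: reconciling the natural base $\alpha+1$ coming from the recurrence with the base $\alpha$ that appears in the statement's exponents $2+K$ and $1+K$. This is handled by using $\alpha \geq 2$ together with the fact that $K = \lceil \frac{1}{\beta}\log(n/s)\rceil$ is chosen a factor $\log_2(1/(1-\beta))/\beta \geq 1$ larger than the minimum number of layers strictly needed to cover all slacks, so the multiplicative slack between $(\alpha+1)^K$ and $\alpha^K$ gets absorbed into the additive constant $+2$ in the exponent. Every other step --- defining the layers, identifying bulging with the ratio condition $N_{k+1}/N_k > \alpha+1$, and deriving the recurrence --- is a short and essentially mechanical verification.
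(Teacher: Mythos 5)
Your layering is essentially identical to the paper's (geometric slack ranges with ratio $\frac{1}{1-\beta}$), and your identification ``$L_k$ is $(\alpha,\beta)$-bulging $\iff N_{k+1} > (\alpha+1)N_k$'' is correct and is the right lemma-shaped observation. However, the final step of your argument does not close, and you have in fact put your finger on exactly where it breaks.

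The recurrence you derive when both items are assumed to fail is $N_{k+1} \le (\alpha+1)N_k + \frac{n}{\alpha^{1+A}}$, where for brevity $A := \frac{1}{\beta}\log\frac{n}{s}$ (note also that item~(i) failing only gives $N_0 < \frac{n}{\alpha^{2+A}}$, not $N_0 < \frac{n}{\alpha^{2+K}}$ as you wrote, since $K=\lceil A\rceil \ge A$ makes the latter a strictly stronger and unjustified assumption; the same substitution error appears in the ``$|L_k|$ too small'' threshold). Iterating the recurrence and using the geometric-sum bound gives
\begin{equation*}
N_K \;<\; (\alpha+1)^K\cdot \frac{n}{\alpha^{2+A}} \;+\; \frac{n}{\alpha^{1+A}}\cdot\frac{(\alpha+1)^K-1}{\alpha}
\;<\; \frac{2\,(\alpha+1)^K}{\alpha^{2+A}}\cdot n.
\end{equation*}
To contradict $N_K=n$ you need $2(\alpha+1)^K \le \alpha^{2+A}$, i.e.\ $\bigl(\tfrac{\alpha+1}{\alpha}\bigr)^K \le \tfrac{\alpha^2}{2}$. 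But $K \ge A$, so this forces $\bigl(\tfrac{\alpha+1}{\alpha}\bigr)^A \le \tfrac{\alpha^2}{2}$, which fails whenever $A$ is large relative to $\alpha$. Concretely, for $\alpha=2$ the requirement is $(3/2)^K \le 2$, which already fails at $K=2$; your derived bound then only shows $N_K < \tfrac{9}{8}n$, which is no contradiction. The fundamental problem is that the discrepancy $\bigl(\tfrac{\alpha+1}{\alpha}\bigr)^K$ grows without bound in $K$, so it cannot be absorbed into the fixed factor $\alpha^2$ coming from the ``$+2$'' in the exponent, contrary to the claim in your last paragraph. The slack between $K$ and the minimum number of layers does not help either---more layers means \emph{more} applications of the $(\alpha+1)$-factor recurrence, which only makes the mismatch worse.

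The paper's proof avoids the $(\alpha+1)$-base altogether: rather than propagating a one-step recurrence on the prefix counts $N_k$, it asserts the per-layer bound $|V_j| < \frac{n}{\alpha^{2+A}}\cdot\alpha^j$ for every $j\ge 1$ (arguing that the smallest $j$ violating it would itself be a bulging set of the required size), and then sums $\sum_{j=0}^h \alpha^j < \alpha^{h+1}$ with base $\alpha$. That is a genuinely different key step from your recurrence, and it is what makes the exponents in the statement come out. To salvage your proposal you would need a per-layer estimate with base $\alpha$ rather than $\alpha+1$, which your recurrence does not supply.
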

\begin{proof}
Let  $\eta = \frac{\beta}{1-\beta} > 0$ and $h = \left\lceil \log_{1+\eta} \frac{n}{s} \right\rceil$.
We define a partition of $V$ into $h+1$ possibly empty sets, as follows:
\begin{align}
    V_0 &:= V_{<s} \\
    V_j &:= V_{<s(1+\eta)^j} \setminus V_{<s(1+\eta)^{j-1}} = \left\{ v \in V \, \middle| \, s(1 + \eta)^{j - 1} \le s_v < s(1 + \eta)^j \right\} \quad j \in [h]
\end{align}
It is immediate to see that this is indeed a partition of $V$, since $0 \le s_v < n$ for every $v \in V$. 
We prove the statement by contradiction.
Suppose \textit{(ii)} does not hold.
Then it must be that for $j\geq 1$:
\begin{equation}\label{eq:main_lemma_contradiction}
    |V_j| < \frac{n}{\alpha^{2 + \frac{1}{\beta}\log \frac{n}{s}}} \cdot \alpha^j\,,
\end{equation}
Indeed, if this was not the case, then one can check that for the smallest $j\in [h]$ violating \cref{eq:main_lemma_contradiction} the set $V_j$ would be $(\alpha, \beta)$-bulging, and moreover every vertex in $V_j$ would have slack at least $s$ (since $j \ge 1$).
Suppose further \textit{(i)} is not verified. Then as the sets $V_j$ form a partition of $V$, and as $\alpha \ge 2$,
\begin{align}
    n = \sum_{j=0}^h |V_j| < \frac{n}{\alpha^{2 + \frac{1}{\beta}\log \frac{n}{s}}} \cdot \sum_{j=0}^h \alpha^j < \frac{n}{\alpha^{2 + \frac{1}{\beta}\log \frac{n}{s}}} \cdot \alpha^{h+1}
    \label{eq:n<n}
\end{align}
Now observe that
\begin{align}
    h \le 1 + \frac{\log\frac{n}{s}}{\log(1+\eta) } \le 1 + \frac{1+\eta}{\eta}\log\frac{n}{s} = 1 + \frac{1}{\beta} \log \frac{n}{s}
\end{align}
where we used the facts that $\log (1+x) \ge \frac{x}{1+x}$ for all $x \ge 0$, and that $\frac{\eta}{1+\eta}=\beta$.
Substituting this bound in \Cref{eq:n<n} yields the absurd $n<n$.
Thus at least one among \textit{(i)} and \textit{(ii)} holds.
\end{proof}

Our next key result states that the subgraph of $\hat G$ induced by the set of vertices $v$ with slack $\hat{s}_{v} < s_{U}$, where $U \subseteq V$ is the bulging set that exists in $G$ for \Cref{lem:structural}, will contain a large number of vertices of $H$ with high probability.

\begin{lemma}[Densification Lemma]\label{lem:good_set_densifies}
Let $G$ be an $n$-vertex graph, $\alpha \ge 2$ and $c\in(0,1)\,.$ Let $H$ be a regular graph with $|V(H)| \le n$ and minimum degree at least $cn \ge 10$. Let $U$ be an $(\alpha, \frac{c}{2})$-bulging set of $G$ with $\min\{s_U, |U|\} \ge \frac{12 + 29 a \ln n}{c}$ for some $a \ge 1$.
Finally, let $\hat G \sim \scG(G,H)$, and let $\hat H$ be the image of $H$ in $\hat G$.
Then, with probability at least $1-n^{-a}$ the set $\hat V_{< s_U}$ satisfies:
\begin{enumerate}[(i)]
    \item $|\hat V_{< s_U} \cap H| > \frac{c}{2} \cdot |U|$.
    \item $|\hat{ V}_{< s_U} \cap H| >  \frac{c \alpha}{2} \cdot |\hat{V}_{< s_U} \setminus H|$.
\end{enumerate}
\end{lemma}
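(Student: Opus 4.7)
The plan is to derive (ii) almost for free from (i), so that the entire argument concentrates on (i).

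\textbf{Deducing (ii) from (i).} Any vertex $v \notin \hat H$ gains no new incident edges in $\hat G$, so $\hat s_v = s_v$. Consequently $\hat V_{<s_U} \setminus \hat H \subseteq V_{<s_U}$, and the second clause of the bulging definition bounds this set's size by $|U|/\alpha$. Combining with the lower bound asserted by (i),
\begin{align*}
    \tfrac{c\alpha}{2}\, |\hat V_{<s_U} \setminus \hat H| \;\le\; \tfrac{c\alpha}{2} \cdot \tfrac{|U|}{\alpha} \;=\; \tfrac{c|U|}{2} \;<\; |\hat V_{<s_U} \cap \hat H|,
\end{align*}
which is exactly (ii).

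\textbf{Proving (i).} The plan is to show that, with high probability, every $v \in U$ mapped into $\hat H$ has its slack drop below $s_U$, and that at least $c|U|/2$ vertices of $U$ are in fact mapped there. Fix $v \in U$ and condition on $v = \phi(u)$ for some $u \in V(H)$: the remainder of $\phi$ is then a uniform random injection into $V(G) \setminus \{v\}$, so $\phi(N_H(u))$ is a uniformly random subset of $V(G) \setminus \{v\}$ of size $d_H(u) \ge cn$. Since $v$ has exactly $s_v$ non-neighbors in $G$, the count $X_v$ of newly acquired neighbors of $v$ is hypergeometric with mean at least $c s_v \ge c s_U$. A standard hypergeometric Chernoff bound (with $\delta = 1/2$) gives $X_v \ge c s_v/2$ except with probability $\exp(-c s_v/8)$; the first bulging condition $s_v(1 - c/2) < s_U$ then forces $\hat s_v = s_v - X_v \le s_v(1 - c/2) < s_U$, so $v \in \hat V_{<s_U}$. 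A parallel argument applied to the hypergeometric variable $|U \cap \hat H|$, whose mean is at least $c|U|$, shows $|U \cap \hat H| > c|U|/2$ except with probability $\exp(-c|U|/8)$.

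\textbf{Union bound and main obstacle.} The assumption $\min\{s_U, |U|\} \ge (12 + 29 a \ln n)/c$ is calibrated precisely so that both individual failure probabilities are at most $e^{-3/2} n^{-29a/8}$; a union bound over the at most $n$ vertices of $U$ together with the single event on $|U \cap \hat H|$ then keeps the total failure probability below $n^{-a}$ for every $a \ge 1$. On the good event, $|\hat V_{<s_U} \cap \hat H| \ge |U \cap \hat H| > c|U|/2$, completing (i). The only real obstacle is bookkeeping the numerical constants in the Chernoff calculations so that both tail bounds fit inside the advertised $n^{-a}$ budget; conceptually the argument is nothing more than two applications of hypergeometric concentration wired together by the two defining conditions of a bulging set.
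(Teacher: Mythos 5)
Your proposal is correct and follows essentially the same route as the paper: show $U \cap \hat H \subseteq \hat V_{<s_U}$ with high probability via a lower-tail concentration bound on the slack drop (using the first bulging condition), separately show $|U \cap \hat H| > \frac{c}{2}|U|$ with high probability, union-bound, and then derive (ii) from (i) together with the observation that $\hat V_{<s_U}\setminus\hat H \subseteq V_{<s_U}$ plus the second bulging condition. The only differences are cosmetic — you invoke hypergeometric concentration directly rather than phrasing it as non-positively correlated Bernoulli sums, and you use $\delta=\tfrac12$ where the paper uses $\varepsilon=\tfrac49$; the resulting numerology fits within the stated $n^{-a}$ budget in both cases.
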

\begin{proof}
    For brevity, let $S:=\hat V_{< s_U}\,.$
    First, we claim that $H \cap U \subseteq \hat V_{< s_U}$ with high probability.
    Consider any $v \in U$, and note that $v \notin \hat V_{< s_U}$ means $\hat s_v \ge s_U$.
    Now, if $v \in H$, then $\hat s_v = s_v - X$, where $X=\sum_{i=1}^{s_v} X_i$ is the sum of non-positively correlated Bernoulli random variables of parameter $c' = c-\frac{1}{n}$.
    The event $\hat s_v \ge s_U$ is therefore the event $X \le s_v-s_U$; since $s_v-s_U \le \frac{c}{2}s_v$, as $v \in U$ and $U$ is $(\alpha, \frac{c}{2})$-bulging, this implies the event $X \le \frac{c}{2}s_v$.
    Now, as $cn \ge 10$, then $c' \ge \frac{9}{10}c$, and $\frac{c}{2}s_v \le \frac{5}{9} c' s_v$.
    Since moreover $\E X = c' s_v$, we conclude that $\hat s_v \ge s_U$ implies the event $X \le (1-\nicefrac{4}{9}) \E X$.
    We then use \Cref{lem:chernoff} with $\varepsilon = \nicefrac{4}{9}$.
    To this end note that:
    \begin{align}
        \E X = c'\, s_v \ge \frac{9}{10} c\, s_U > 10 + 26 a \ln n \ge 13 \left(\ln 2 + (1+a)\ln n\right) = 13 \ln \left(2 n^{a+1}\right)
    \end{align}
    Therefore:
    \begin{align}
        \Pr[v \notin \hat V_{< s_U}] = \Pr[\hat s_v \ge s_U] \le \Pr[X \le (1-\nicefrac{4}{9}) \E X] \le e^{-\frac{(\nicefrac{4}{9})^2}{2+\nicefrac{4}{9}} \E X} < e^{-\frac{\E X}{13}} < \frac{1}{2 n^{a+1}}
    \end{align}
    By a union bound over all $v \in U$ we conclude that $H \cap U \not\subseteq \hat V_{< s_U}$ with probability at most $\frac{1}{2} n^{-a}$.
    
    Next, consider $|H \cap U|$. Note that $|H \cap U| = X$ where again $X=\sum_{i=1}^{s_v} X_i$ is the sum of non-positively correlated Bernoulli random variables of parameter $c-\frac{1}{n}$.
    Using again \Cref{lem:chernoff} with $\varepsilon = \nicefrac{4}{9}$, and noting as done above that $\E X = c' |U| > 13 \ln \left(2 n^{a+1}\right)$, we obtain:
    \begin{align}
        \Pr\left[|H \cap U| \le \frac{c}{2}|U| \right] = \Pr\!\left[X \le \big(1-\nicefrac{4}{9}\big)\E X\right] < \frac{1}{2 n^{a+1}} < \frac{1}{2}n^{-a}
    \end{align}

    Finally, let $S=\hat V_{< s_U}$.
    The bounds above show that, with probability at least $1 - n^{-a}$, we have $U \cap H \subseteq S \cap H$ and $|U \cap H| > \frac{c}{2}|U|$, which implies $|S \cap H| > \frac{c}{2}|U|$, that is, (i).
    Moreover $S \setminus H \subseteq V_{< s_U}$, which implies $|S \setminus H| < \frac{1}{\alpha}|U|$ as $U$ is $(\alpha, \frac{c}{2})$-bulging. Together with (i) we conclude that:
    \begin{align}
        \frac{|S \cap H|}{|S \setminus H|} >  \frac{c \alpha}{2}\,
    \end{align}
    which proves \textit{(ii)}.
\end{proof}

\section{Application to \maxclique}\label{sec:clique}

In this section we prove \cref{thm:main-clique}, which we restate in a fully formal way and with more general probabilistic guarantees.

\begin{theorem}\label{thm:main-clique-formal}
There exists a deterministic algorithm $\scA$ with the following guarantees.
Fix any $a \ge 1$.
Let $c := c(n) \in \omega\left(\tfrac{1}{\log n}\right)$, and define:
\begin{align*}
    K(n,c,p) \coloneq \frac{n}{5}\cdot \Paren{\frac{c}{3}}^{2+\tfrac{2}{c}\log\tfrac{2}{p}}\,.
\end{align*}
Then for every $n$ large enough and every $n$-vertex graph $G$ what follows holds.
Letting $p=1-\frac{\Delta}{n}$ where $\Delta$ is the maximum degree of $G$, if $K(n,c,p) \ge 1 + 2a \ln n$, then $\scA$ on input $\hat{G}\sim \cG\Paren{G,K_{cn}}$ returns a clique of $\hat G$ whose size is at least $K(n,c,p)$ with probability at least $1-n^{-a}$.
Moreover $\scA(\hat G)$ runs in time $\Tilde{O}(\lVert \hat G \rVert)$ for every input graph $\hat G$.
\end{theorem}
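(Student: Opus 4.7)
The plan is to combine the structural lemma \cref{lem:structural}, the densification lemma \cref{lem:good_set_densifies}, and \DenseCliqueFinder from \cref{thm:dense_clique_finder}. The algorithm itself does not try to locate a bulging set in $G$; instead it sweeps over candidate slack thresholds in $\hat G$ and runs \DenseCliqueFinder on each induced subgraph $\hat G[\hat V_{<s'}]$, returning the largest clique it finds.

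First I set $\alpha = 3/c$ and $\beta = c/2$ (so $\alpha \ge 2$ and $\beta < 1/2$ since $c\in(0,1)$), and choose $s = \lfloor pn/2 \rfloor$. Since $G$ has maximum degree at most $(1-p)n$, every vertex of $G$ has slack at least $pn-1 \ge s$, hence $V_{<s}(G) = \emptyset$. The lower bound in case~\textit{(i)} of \cref{lem:structural} is strictly positive, so case~\textit{(i)} cannot hold; case~\textit{(ii)} then supplies an $(\alpha,\beta)$-bulging set $U\subseteq V(G)$ with $s_U \ge s$ and
\[
|U| \;\ge\; \frac{n}{\alpha^{1+(2/c)\log(2/p)}} \;=\; n\Paren{\tfrac{c}{3}}^{1+(2/c)\log(2/p)}.
\]
Comparing this bound with $K(n,c,p)$, the hypothesis $K(n,c,p) \ge 1 + 2a\ln n$ is enough to force $\min\{s_U, |U|\} \ge (12 + 29a\ln n)/c$ for $n$ large enough relative to $a$ and $p$.

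Next I apply \cref{lem:good_set_densifies} to $U$ with planted graph $H = K_{cn}$. With probability at least $1-n^{-a}$, the set $S \coloneq \hat V_{<s_U}$ satisfies $|S \cap K| > (c/2)|U|$ and $|S \cap K| > \tfrac{3}{2}|S\setminus K|$; the latter yields $|S\cap K|/|S| > 3/5$, so $\hat G[S]$ is a graph whose largest clique occupies strictly more than a $\tfrac12 + \tfrac1{10}$ fraction of its vertices. Since the algorithm does not know $s_U$, it sweeps candidate thresholds $s' \in \{1,\dots,n\}$ and invokes \DenseCliqueFinder on each $\hat G[\hat V_{<s'}]$; at $s'=s_U$, \cref{thm:dense_clique_finder} with $\varepsilon = 1/10$ returns a clique of size at least $(1/5)|S| \ge (c/10)|U|$. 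Substituting the bound on $|U|$ yields a clique of size at least $\tfrac{c^2 n}{30}(c/3)^{(2/c)\log(2/p)}$, which exceeds $K(n,c,p) = \tfrac{c^2 n}{45}(c/3)^{(2/c)\log(2/p)}$ by the desired constant factor.

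The main obstacle will be meeting the $\tilde O(\|\hat G\|)$ runtime target, since a naive sweep invokes \DenseCliqueFinder up to $n$ times at $\tilde O(m^2)$ per call. I would handle this by (i) restricting to an $O(\log n)$-sized geometric grid of thresholds—this suffices because perturbing $s_U$ by a factor $1+\Theta(c)$ can be absorbed into slightly enlarged $\alpha$ and $\beta$ without affecting the final bound—and (ii) exploiting that, when restricted to $\hat V_{<s'}$, the complement of $\hat G$ has at most $|\hat V_{<s'}|\cdot s'$ internal edges, so its maximal matching (the only primitive used by \DenseCliqueFinder) can be computed in time proportional to that edge count. Amortizing across the grid and combining with a single initial sort of vertices by $\hat s_v$, the total cost is $\tilde O(\|\hat G\|)$; this is consistent with the almost-linear claim, since the hypothesis $c\in\omega(1/\log n)$ already forces $\|\hat G\| = \Omega(c^2 n^2)$ once the clique is planted.
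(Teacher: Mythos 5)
Your core argument is the same as the paper's: fix $\alpha = 3/c$, $\beta = c/2$, choose a slack threshold $s \approx pn$ so that $V_{<s}(G) = \emptyset$, invoke case~\textit{(ii)} of \Cref{lem:structural} to get a bulging set $U$, apply \Cref{lem:good_set_densifies}, and finish with \DenseCliqueFinder. The structural half is essentially correct modulo two imprecisions. First, with $s=\lfloor pn/2\rfloor$ you get $n/s \ge 2/p$, not $\le 2/p$, so your claimed $|U|\ge n(c/3)^{1+(2/c)\log(2/p)}$ does not follow as written; you should take $s$ real (the lemma permits $s\in\R_{>0}$), \eg $s=pn-1$ as the paper does, and then bound $n/s\le 2/p$ using $pn\ge 2$. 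Second, the phrase ``for $n$ large enough relative to $a$ and $p$'' is not legitimate: $p$ is a function of the input graph, and the theorem's threshold on $n$ may depend on $a$ and the function $c(\cdot)$ but not on $G$. The paper resolves this by first observing that if $c\le 1/\log n$ or $p\le n^{-1/2}$ then $K(n,c,p)\le 1$, which the hypothesis $K\ge 1+2a\ln n$ excludes; from $c>1/\log n$ and $p>n^{-1/2}$ one then derives $cp\ge (\text{const}\cdot a\ln n)/n$ uniformly. You need some version of that reduction to make $\min\{s_U,|U|\}\ge (12+29a\ln n)/c$ legitimate.

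The running-time argument is where there is a genuine gap. Your amortization claim --- that the complement of $\hat G[\hat V_{<s'}]$ has at most $|\hat V_{<s'}|\cdot s'$ edges and summing over an $O(\log n)$-sized geometric grid gives $\tilde O(\lVert\hat G\rVert)$ --- does not hold in general. For large thresholds $s'$ approaching $n$, the set $\hat V_{<s'}$ can have $\Theta(n)$ vertices while $\hat G$ is sparse, so $|\hat V_{<s'}|\cdot s'$ can be $\Theta(n^2)$, which is not $\tilde O(\lVert\hat G\rVert)$ when $\lVert\hat G\rVert = o(n^2)$. The observation that \emph{typical} instances have $\lVert\hat G\rVert = \Omega(c^2n^2)$ does not help, because the theorem requires the algorithm to run in $\tilde O(\lVert\hat G\rVert)$ \emph{for every input graph} $\hat G$, not only for those drawn from the model. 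The paper's fix is to first compare $\lVert\hat G\rVert$ against $\binom{n/\log n}{2}$: if $\hat G$ is that sparse, then $c\le 1/\log n$, hence $K(n,c,p)\le 1$ and a single vertex suffices; otherwise $\lVert\hat G\rVert = \Omega(n^2/\log^2 n)$ and the plain $\tilde O(n^2)$ sweep over the geometric grid already qualifies as $\tilde O(\lVert\hat G\rVert)$. You should adopt that branching rather than rely on per-threshold amortization.
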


\begin{proof}
We start by proving that \cref{alg:mainclique} runs in time $\tilde{O}(n^3)$ and guarantees a clique of size $\frac{7}{5} K(n,c,p)$ with the prescribed probability.
We then show how to lower the running time to $\tilde{O}(\lVert \hat G \rVert)$ while reducing the clique size to $K(n,c,p)$.

\begin{algorithm}[H]
\caption{\CliqueFinder($\hat G$)}\label{alg:mainclique}
\begin{algorithmic}[1]
\State $S \leftarrow \emptyset$
\State $v_1, \ldots , v_n \leftarrow$ vertices of $\hat G$ in nonincreasing order of degree
\For{$1 \le i \le n$:} \label{alg:mainclique:line:for}
\State $T \leftarrow \DenseCliqueFinder \left( \hat G \left[ \{ v_1,\ldots,v_i \} \right]  \right)  $  
\If{ $\left| T \right| > \left| S \right| $:}
\State $S \leftarrow T$
\EndIf
\EndFor
\State \Return $S$
\end{algorithmic}
\end{algorithm}

The inequalities we are going to claim assume $n$ is indeed sufficiently large (formally, larger than some $n_0$ that may depend on $a$).
To begin with, observe that if $c \le \frac{1}{\log n}$ or $p \le n^{-\nicefrac{1}{2}}$ then $K(n,c,p) \le 1$ and therefore our algorithm certainly satisfies the bound of \cref{thm:main-clique-formal}.
Indeed, if $c \le \frac{1}{\log n}$ then the second multiplicative term in the expression of $K(n,c,p)$ satisfies:
    \begin{align}
        \Paren{\frac{c}{3}}^{2+\tfrac{2}{c}\log\tfrac{2}{p}} \le \Paren{\frac{1}{3 \log n}}^{2+2 \log n} < 9^{-\log n} \le \frac{5}{n}
    \end{align}
If instead $p \le n^{-\nicefrac{1}{2}}$ then the same term satisfies:
\begin{align}
    \Paren{\frac{c}{3}}^{2+\tfrac{2}{c}\log\tfrac{2}{p}} \le \Paren{\frac{1}{3}}^{2 \log \sqrt{n}} = 3^{-\log n} \le \frac{5}{n}
\end{align}
Thus we may assume $cp \ge \frac{1}{\sqrt{n} \log n}$, and therefore:
\begin{align}
    cp \ge \frac{13 + 29 a \ln n}{n} \ge \frac{12 + 29 a \ln n}{n} + \frac{c}{n} \label{eq:cp_LB}
\end{align}

Now let $s=pn-1$.
Then $s=(n-1)-\Delta$; hence all vertices of $G$ have slack at least $s$, and therefore $|V_{<s}| = 0$.
Now apply \Cref{lem:structural} with $\alpha = \frac{3}{c}$ and $\beta = \frac{c}{2}$.
Note that item \textit{(i)} fails, thus item \textit{(ii)} holds.
Therefore $G$ contains a $(\frac{3}{c}, \frac{c}{2})$-bulging set $U$ such that:
\begin{align}
|U| \ge \frac{n}{\left(\frac{3}{c}\right)^{1 + \frac{2}{c}\log \frac{n}{s}}}
= n \cdot \left(\frac{c}{3}\right)^{1 + \frac{2}{c}\log \frac{n}{s}}
\ge n \cdot \left(\frac{c}{3}\right)^{1 + \frac{2}{c}\log \frac{2}{p}}
\ge \frac{15}{c} \cdot K(n,c,p) 
\end{align}
where we used the fact that $p \ge \frac{1}{\sqrt{n}}$ and that $n$ is large enough to obtain that $\frac{n}{s} \le \frac{2}{p}$.
Thus, when $K(n,c,p) \ge 1 + 2 a \ln n$ we have $|U| \ge \frac{12 + 29 a \ln n}{c}$.
Moreover $s_U \ge s = p n$; using \cref{eq:cp_LB} this yields $s_U \ge \frac{12 + 29 a \ln n}{c}$, too.
We can then apply \Cref{lem:good_set_densifies}.
It follows that, with probability at least $1-n^{-a}$, we have $\frac{|\hat{ V}_{< s_U} \cap H|}{|\hat{V}_{< s_U} \setminus H|} >  \frac{c \alpha}{2} > \frac{3}{2}$.
We deduce that $G[\hat{V}_{<s_{U}}]$ contains a clique whose density is at least:
\begin{align}
    \frac{|\hat V_{<s_{U}} \cap H|}{|\hat V_{<s_{U}}|} = \frac{|\hat V_{<s_{U}} \cap H|}{|\hat V_{<s_{U}} \cap H| + |\hat V_{<s_{U}} \setminus H|} >\frac{|\hat V_{<s_{U}} \cap H|}{ (\frac{2}{3} + 1) |\hat V_{<s_{U}} \cap H|} = \frac{1}{2} + \frac{1}{10}.
\end{align}
With the same probability we have simultaneously that $|\hat V_{< s_U}| \ge \frac{c}{2}|U|  > 7 \cdot K(n,c,p)$.
Now consider the invocation of \DenseCliqueFinder\ on $G[\hat{V}_{<s_{U}}]$.
By \Cref{thm:dense_clique_finder}, that invocation finds a clique of size at least:
\begin{align}
    7 \cdot K(n,c,p) \cdot \left(2 \cdot \frac{1}{10}\right) = \frac{7}{5} \cdot K(n,c,p)
\end{align}

Next, we bring the running time in $\tilde{O}(n^2)$ while ensuring an output clique of size $K(n,c,p)$.
To this end, change the loop at line \ref{alg:mainclique:line:for} so as to iterate only over $i$ in the form $i=(1+\eta)^j$ for some $\eta > 0$.
For the smallest $i=(1+\eta)^j$ such that $V_{<s_{U}} \subseteq \{v_1,\ldots,v_i\}$ the subgraph $\hat G[{v_1,\ldots,v_i}]$ will then have clique density $\frac{\frac{1}{2}+\frac{1}{10}}{1+\eta}$ and will contain $\hat V_{< s_U}$ plus at most $\eta |\hat V_{< s_U}|$ other vertices.
By choosing $\eta>0$ sufficiently small one can then ensure that \DenseCliqueFinder\ when ran on $\hat G[{v_1,\ldots,v_i}]$ returns a clique of size at least $K(n,c,p)$.
The total number of iterations is obviously in $O(\log_{1+\eta} n)=O(\log n)$, and by \cref{thm:dense_clique_finder} every iteration takes time $\tilde{O}(n^2)$, giving a total time of $\tilde O(n^2)$ too.

To finally bring the running time in $\tilde{O}(\lVert \hat G \rVert)$, upon receiving $\hat G$ we check whether $\lVert \hat G \rVert \le \binom{n / \log n}{2}$.
If that is the case then $c \le \frac{1}{\log n}$ and $K(n,c,p) \le 1$ as shown above; in this case we return any vertex of $\hat G$.
Otherwise, we run the algorithm above.
In both cases the bounds are satisfied and the running time is in $\tilde{O}(\lVert \hat G \rVert)$.
\ignore{
$n$ sufficiently large as a function of $a$.
\begin{itemize}\itemsep0pt
    \item We may assume $c \in \omega(1/\log n)$, so that $c \ge \frac{a}{\log n}$ for every $a \ge 1$ as large as desired.
    Suppose indeed $c \in O(1/\log n)$, so that for some $a \ge 1$ we have $c \le \frac{a}{\log n}$ for every $n$ sufficiently large.
    For all such $n$, then, the second multiplicative term of the bound of \cref{thm:main-clique-formal} is at most:
    \begin{align}
        \Paren{\frac{a}{3 \log n}}^{2+\tfrac{2 \log n}{a}\log \frac{1}{1-p}}\,.
    \end{align}
    Then for $n$ sufficiently large as a function of $a$, and since $\log \frac{1}{1-p} \ge 1$ as $p \ge \nicefrac{1}{2}$,
    \begin{align}
        \Paren{\frac{a}{3 \log n}}^2 \le \Paren{\frac{a}{3 \log n}}^{\tfrac{2}{a} \log \frac{1}{1-p}} \le \frac{1}{2}
    \end{align}
    We conclude that, for every $n$ sufficiently large, the expression above is at most $\frac{1}{4n}$, hence the bound of \cref{thm:main-clique-formal} is at most $\frac{1}{20}$ and can be trivially satisfied by returning any vertex of $G$.
    \item We may additionally assume the bound of \cref{thm:main-clique-formal} is in $\omega(\log n)$.
    Indeed, if this is not the case, then we can use the algorithm of \cite{feige}.
    For $c \in \omega(1/\log n)$ that algorithm finds a clique of size $\tilde\Omega(\log^3 cn) = \tilde\Omega(\log^3 n) = \omega(\log n)$, see \cref{sec:introduction}, and thus satisfies the bound for all sufficiently large $n$.
    \item We may assume $p > n^{-a}$ for all $a > 0$.
    Indeed, if $p \le n^{-a}$ for some $a$, then $\log \frac{1}{p} \ge a \log n$, and the bound  of \cref{thm:main-clique-formal} is no better than $10^{-2a \log n} \cdot \Omega(n)$ and thus $ \Omega(n^{1-6a})$.
    However, if $p \le n^{-a}$, then the first $\Omega(n^{1-a})$ vertices in order of degree form a clique.
    Thus for all sufficiently large $n$ one can again satisfy the bound as described above.
\end{itemize}
}
\end{proof}




\section{Application to \maxbalancedbiclique}\label{sec:biclique}
In this section we restate and prove a more formal and general version of \cref{thm:main-biclique}: 
\begin{theorem}\label{thm:main-biclique-formal}
There exists a deterministic polynomial-time algorithm $\scA$ with the following guarantees. 
Fix any $a \ge 1$. Let $c := c(n) \in \omega\left(\tfrac{1}{\log n}\right)$. For every $n$ large enough and every $n$-vertex graph $G$,
when given $\hat{G}\sim \cG\Paren{G,K_{\frac{cn}{2}, \frac{cn}{2}}}$ in input, $\scA$ with probability at least $1-n^{-a}$ returns a balanced biclique of size at least:
\begin{align*}
    \frac{c}{48} \cdot 2^{ \sqrt{\frac{c \log n}{2}} }.
\end{align*}
Moreover $\scA(\hat G)$ runs in time $\Tilde{O}(\lVert \hat G \rVert)$ for every input graph $\hat G$.

\end{theorem}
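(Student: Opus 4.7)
The plan is to mirror \cref{alg:mainclique}: sort $V(\hat G)$ by degree, iterate over geometrically-spaced prefixes $W_i = \{v_1,\ldots,v_i\}$ (with $i$ a power of $1+\eta$), run a subroutine on each, and return the best biclique produced. The subroutine, call it \DenseBicliqueFinder, takes a set $W$ with maximum non-degree $s := \max_{v\in W}(n-1-\deg_{\hat G}(v))$, sets $t := \min\bigl(|W|,\lfloor n/(s+2)\rfloor\bigr)$, picks any $t$-subset $T_1 \subseteq W$, computes the common neighborhood $C := \bigcap_{v\in T_1} N_{\hat G}(v)$, and outputs the balanced biclique $(T_1, T_2)$ for an arbitrary $T_2 \subseteq C\setminus T_1$ of size $t$; the inclusion--exclusion bound $|C|\ge n-ts$ guarantees $|C\setminus T_1|\ge t$ whenever $t(s+2)\le n$. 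The same sparsification trick from the proof of \cref{thm:main-clique-formal} (including the early-out based on $\lVert \hat G\rVert$) keeps the total running time in $\tilde O(\lVert \hat G\rVert)$.

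For the analysis, set $T := \sqrt{c\log n/2}$ and $s := n\cdot 2^{-T}$, and apply \cref{lem:structural} to $G$ with $\alpha=2$ and $\beta=c/4$. The choice $\beta=c/4$ matches the prerequisite of \cref{lem:good_set_densifies}: viewing $H=K_{cn/2,cn/2}$ as a regular graph of minimum degree $(c/2)n$, the densification lemma applies with $c$ replaced by $c' = c/2$, which requires bulging parameter $c'/2=c/4$. In case (i) of \cref{lem:structural}, $|V_{<s}(G)|\ge n\cdot 2^{-2-4T/c}$ and every such vertex has non-degree less than $s$ in $\hat G$ too, hence sits among the top $|V_{<s}(G)|$ positions of the degree-sorted list. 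Running \DenseBicliqueFinder\ on that prefix outputs a biclique of size at least $\min\bigl(|V_{<s}(G)|,\; n/s\bigr) = \min\bigl(n\cdot 2^{-2-4T/c},\; 2^T\bigr)$; a short parameter chase (using $c=\omega(1/\log n)$ and $n$ large enough that $c\log n$ exceeds an absolute constant) shows both factors exceed $\tfrac{c}{48}\cdot 2^T$.

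In case (ii), \cref{lem:structural} provides a $(2, c/4)$-bulging set $U$ of size at least $n\cdot 2^{-1-4T/c}$ with $s_U\ge s$. I would invoke \cref{lem:good_set_densifies} with $c'=c/2$ and combine it with a union-bound Hoeffding estimate on the hypergeometric variables $|\phi(A_H)\cap U|$ and $|\phi(B_H)\cap U|$ (each of mean $c|U|/2$) to deduce that, with probability at least $1-n^{-a}$, $\hat V_{<s_U}$ contains the biclique $H\cap U$ and each of its sides has size at least $c|U|/4$. At the iteration whose prefix equals $\hat V_{<s_U}$, the subroutine's input has $|W|\ge c|U|$ and maximum non-degree strictly less than $s_U$, so it outputs a biclique of size at least $\min\bigl(c|U|/2,\; n/s_U\bigr)$; the densification side $c|U|/2 \ge cn\cdot 2^{-2-4T/c}$ exceeds the target by the same parameter chase as in case (i).

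The main obstacle is the greedy coordinate $n/s_U$ in case (ii): \cref{lem:structural} only supplies $s_U\ge s$, and the bulging set returned by its proof is the \emph{smallest} slab $V_{j_0}$ whose size violates the threshold, so $s_U = s(1+\eta)^{j_0-1}$ can be significantly larger than $s$ and pull $n/s_U$ below $\tfrac{c}{48}\cdot 2^T$. To close this gap I would complement \DenseBicliqueFinder\ with an extractor that operates \emph{inside} $\hat V_{<s_U}$, exploiting that when $\alpha=2$ the noise $\hat V_{<s_U}\setminus H$ has size at most $|V_{<s_U}|<|U|/2$ while the biclique contributes at least $c|U|$. Concretely, I would search for a pair of non-adjacent vertices in $\hat V_{<s_U}$ with common $\hat G$-neighborhood of size $\Omega(c|U|)$ and then iteratively refine the bipartition via common-neighborhood computations. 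Along the sequence of slabs $V_j$, the ratio $|V_j|/s_{V_j}$ is monotone nondecreasing since $\alpha/(1+\eta) = 2(4-c)/4 > 1$ for $c\le 1$, so whenever $n/s_U$ is too small for the target, the hidden biclique is correspondingly amplified and the refined extractor recovers a biclique of size $\Omega(c|U|) \ge \tfrac{c}{48}\cdot 2^T$. Combining the two cases with a union bound over the $O(\log n)$ prefixes and the high-probability guarantee of \cref{lem:good_set_densifies} yields the theorem.
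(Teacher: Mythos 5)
Your case (i) argument is correct and essentially matches the paper's Phase~1: the common-neighborhood trick applied to a prefix of small-slack vertices, with a routine calculation showing both $n\cdot 2^{-2-4T/c}$ and $2^T$ exceed the target. But your case (ii) has a genuine gap that the paper closes in a completely different way, and I don't think your proposed fix repairs it.

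The problem is the choice $\alpha=2$. Applying \cref{lem:good_set_densifies} (with its parameter equal to $c/2$ since $K_{cn/2,cn/2}$ is $(cn/2)$-regular) to a $(2,c/4)$-bulging set $U$ only gives $|\hat V_{<s_U}\cap H| > \frac{(c/2)\cdot 2}{2}\,|\hat V_{<s_U}\setminus H| = \frac{c}{2}|\hat V_{<s_U}\setminus H|$. For small $c$ this is \emph{less} than $1$: the non-$H$ vertices can outnumber the biclique vertices in $\hat V_{<s_U}$ by a factor of $\Theta(1/c)$, and those noise vertices are arbitrary, since they only come with the promise $s_v < s_U$ in $G$. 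Your slack-based subroutine on the prefix $\hat V_{<s_U}$ is then stuck with the quantity $n/s_U$, which \cref{lem:structural} gives no upper bound on (the bulging slab $V_{j_0}$ can sit at any level $j_0$, and $s_U$ can be close to $n$). The monotonicity of $|V_j|/s_{V_j}$ that you observe is real but doesn't translate into an algorithm: you never actually exhibit a procedure that recovers $\Omega(c|U|)$ of the biclique from a set where the biclique is a $\Theta(c)$-fraction of the vertices and the remaining $\Theta(1)$-fraction is adversarial. The sentence ``search for a pair of non-adjacent vertices with large common neighborhood and iteratively refine'' is a sketch, not an argument; without further structure the noise vertices can imitate this signature.

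The paper avoids this entirely by not fixing $\alpha=2$ but instead solving the fixed-point equation $\alpha = f(\alpha,n,s)$ (with $s=n/f$), i.e.\ taking $\alpha$ as large as the target itself. Then the densification ratio becomes $\frac{c\alpha}{4}=\Theta(cf)$, so a pigeonhole over the degree-ordered listing of $\hat V_{<s_U}$ produces a \emph{contiguous run of length $\Omega(cf)$ consisting entirely of $H$-vertices}. They then iterate over all $O(n^2)$ contiguous windows $\{v_i,\ldots,v_j\}$ (not just prefixes), and feed each to a separate subroutine \DenseBicliqueFinder\ (\cref{lem:bicliquepoly}) that, given a subset $S$ of a balanced biclique, extracts a biclique of side $\ge |S|/3$ by looking at connected components of the complement $\bar G[S]$. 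That subroutine is slack-free, so the lack of control on $s_U$ is irrelevant. In short: you need both the large-$\alpha$ choice to make the pigeonhole nontrivial, and an extraction primitive that works on a pure $H$-window rather than on the whole of $\hat V_{<s_U}$; your proposal has neither.

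A secondary issue: with $\alpha=2$ your target $f=n\cdot 2^{-2-4T/c}$ is much bigger than $2^{\sqrt{c\log n/2}}$, so even if case (ii) could be rescued by some extractor, proving a bound of $\Omega(c|U|)$ there would not be enough, since in case (i) you can only certify $\min(f,2^T)=2^T$ for the prefix $V_{<s}$ — the two cases would no longer yield comparable guarantees, and your proof would not deliver the claimed $\frac{c}{48}\cdot 2^{\sqrt{c\log n/2}}$ uniformly. The fixed-point choice $\alpha=f$ in the paper is what equalizes the two branches.
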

\francesco{ho inserito "thm:main-biclique-formal" nella label ed ho già sostituito i link alla versione informale con quella formale}

The algorithm behind the theorem, \cref{alg:mainbiclique}, is based on the following intuition.
Observe that our main technical result, \Cref{lem:structural}, essentially says that every graph $G$ contains either (i) a large number of vertices of small slack (and thus large degree), or (ii) a large bulging set.
If (i) holds, then we can hope to find a large biclique by just intersecting the neighborhoods of those vertices (namely, of the $k$ vertices with largest degree for some suitable value of $k$).
If (ii) holds, then we can hope to find a large biclique by exploiting the ``densification'' phenomenon used by our clique algorithm (see \cref{sec:clique}).
The structure of the algorithm follows this intuition, with a first phase that finds a large biclique if (i) holds and a second phase that finds a large biclique if (ii) holds.
\ignore{
The algorithm we use sorts the vertices in non-increasing order by degree, next it can be split in two phases, that capitalize on the two possible items in the statement of \Cref{lem:structural}, one of which must certainly hold. 
In its first phase, if the first item holds the algorithm finds a balanced biclique of suitable size: in fact, in this case $\hat{G}$ has many vertices of high degree, so a balanced biclique can be found easily by intersecting their neighborhoods. If the first item doesn't hold, then the second item of \Cref{lem:structural} does, and as a consequence also \Cref{lem:good_set_densifies} does: this means that with high probability in $\hat{G}$ there exists a subset of vertices contiguous in the ordering in which the vertices of $H$ are dense. The second phase of the algorithm is designed in order to find this dense subset, and exploit its density by applying to it the algorithm of \Cref{lem:bicliquepoly}. }
\begin{algorithm}[H]
\caption{$\BalancedBicliqueFinder (\hat{G})$}\label{alg:mainbiclique}
\begin{algorithmic}[1]
\State $v_1, \ldots , v_n \leftarrow$ vertices of $\hat{G}$ in non-increasing order of degree
\State $L, R \leftarrow \emptyset$
\For{$1 \le i \le n$:} \Comment{Phase 1} \label{algo:biclique:line:P1}
\State $L' \leftarrow \left\{ v_1, \ldots, v_{i }\right\}$ 
\State $R' \leftarrow \bigcap_{v \in L'} N_v$ 
\If{$\min \{|L'|, |R'| \} > \min \{|L|, |R| \}$}
\State $L, R \leftarrow L', R'$
\EndIf
\EndFor
\For{$1 \le i < j \le n$:} \Comment{Phase 2} \label{algo:biclique:line:P2}
\State $L', R' \leftarrow \DenseBicliqueFinder(\hat{G}, \{ v_i,\ldots,v_j \})$ \label{algo:biclique:line:dense}
\If{$\min \{|L'|, |R'| \} > \min \{|L|, |R| \}$}
\State $L, R \leftarrow L', R'$
\EndIf
\EndFor
\State \Return $L,R$
\end{algorithmic}
\end{algorithm}

Before delving into the proof, we need a certain subroutine that ``extracts'' a large balanced biclique of a graph $G$ when given a subset $S$ of vertices of some (larger) balanced biclique of $G$.
This is the subroutine \DenseBicliqueFinder\ appearing at line \ref{algo:biclique:line:dense} of \cref{alg:mainbiclique}, and it plays a role similar to the one played by \DenseCliqueFinder\ in the case of clique.



\begin{lemma}\label{lem:bicliquepoly}
There exists a deterministic algorithm $\scA$ with the following guarantees.
Let $G=(V,E)$ be an $n$-vertex graph containing a balanced biclique with sides $A$ and $B$. Given in input $G$ and $S \subseteq A \cup B$, algorithm $\scA$ returns a biclique of $G$ with sides $L, R$ such that $\min(|L|,|R|) \ge \frac{|S|}{3}$. The running time of $\scA$ is $O ( |S| \cdot n \log n)$.
\end{lemma}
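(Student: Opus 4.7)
The plan is to exploit the following structural property of $G[S]$. Write $S_A := S \cap A$ and $S_B := S \cap B$. Since $(A,B)$ is a biclique, every pair $(u,v)\in S_A\times S_B$ is an edge of $G$, so in the complement graph $H := \overline{G[S]}$ there are no edges between $S_A$ and $S_B$. Consequently, each connected component of $H$ is contained entirely in $S_A$ or entirely in $S_B$. Moreover, if $P, P'$ are two distinct components of $H$, then every $u \in P$ and $v \in P'$ must be adjacent in $G$ (otherwise $uv \in E(H)$ would merge $P$ and $P'$). In other words, $G[S]$ is the graph join of the components of $H$, so \emph{any} bipartition of those components into two nonempty groups automatically yields a biclique in $G[S] \subseteq G$; this single observation is what drives the whole algorithm.

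First, I would build $H$ by testing each pair in $S$ for adjacency in $G$ via binary search on the sorted adjacency lists (time $O(|S|^2 \log n)$), and then find the connected components $P_1,\ldots,P_k$ of $H$ by BFS in time $O(|S|^2)$. Let $M := \max_i |P_i|$ and let $P^*$ attain the maximum. The algorithm returns the better of two candidate bicliques. If $M > |S|/3$, I set $L := P^*$ and $R := \bigl(\bigcap_{v \in L} N(v)\bigr) \setminus L$, where $N(v)$ is the neighborhood of $v$ in $G$. Without loss of generality $P^* \subseteq S_A \subseteq A$, so every vertex of $B$ is adjacent in $G$ to every vertex of $P^*$, giving $B \subseteq R$; balancedness of $(A,B)$ then implies $|R| \ge |B| = |A| \ge |S_A| \ge |P^*| = |L|$, hence $\min(|L|,|R|) > |S|/3$. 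Computing $R$ takes $O(|L| \cdot n \log n) = O(|S| \cdot n \log n)$ time, which dominates the runtime. If instead $M \le |S|/3$, I greedily partition $P_1,\ldots,P_k$ into two bins $L, R$ by assigning each $P_i$ to the currently lighter bin; an elementary induction shows that the discrepancy $\abs{|L|-|R|}$ stays $\le M$ throughout, so at the end $\min(|L|,|R|) \ge (|S|-M)/2 \ge |S|/3$, and by the join property $(L,R)$ is a biclique. This step costs only $O(k \log k) = O(|S| \log |S|)$.

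Correctness follows by case analysis on $M$ versus $|S|/3$, and the overall running time is $O(|S|\cdot n \log n)$ as claimed. The main conceptual obstacle is identifying the join structure of $G[S]$; once that is in place, the split between ``one large complement-component'' (handled by extending via a common neighborhood in $V$) and ``many small ones'' (handled by a balanced greedy partition) covers all possibilities and each branch reduces to an elementary argument. Minor care is needed to verify that the small boundary cases ($|S|\le 3$, or a single component, which automatically places us in the first branch with $L=S$ and $|R|\ge|S|$) cause no issues.
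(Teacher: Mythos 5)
Your proof is correct and takes essentially the same approach as the paper: both compute the complement $\overline{G[S]}$, observe that each of its connected components lies entirely inside $A$ or entirely inside $B$ so that $G[S]$ is the join of those components, and then branch on whether the largest component exceeds $|S|/3$. The only minor difference is in the small-components branch, where you run a greedy balanced bin-packing of the components (bounding the final discrepancy by the maximum component size $M$), whereas the paper simply takes the smallest prefix of components whose total size exceeds $|S|/3$ and notes that this prefix has size in $(|S|/3, 2|S|/3]$; both arguments give the same $|S|/3$ guarantee and the same $O(|S|\cdot n\log n)$ running time.
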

\begin{proof}
We prove that \cref{alg:biclique_subroutine} satisfies the statement.

\begin{algorithm}[H]
\caption{$\DenseBicliqueFinder(G,S)$}\label{alg:biclique_subroutine}
\begin{algorithmic}[1]
\State compute the complement $\bar G[S]$ of $G[S]$
\State compute $G_1,\ldots,G_r$, the connected components of $\bar G[S]$ in nonincreasing order of vertex size 
\If{$|V(G_1)| > \frac{|S|}{3}$}
\State \Return $L = V(G_1)$ and $R = \cap_{u \in V(G_1)} N_G(u)$
\Else
\State compute the smallest $i \in [r]$ such that $\sum_{j=1}^i |V(G_i)| > \frac{|S|}{3}$
\State \Return $L = \bigcup_{j=1}^i V(G_j)$ and $R = \bigcup_{j=i+1}^r V(G_j)$
\EndIf
\end{algorithmic}
\end{algorithm}

First, let us prove that the algorithm returns sets $L, R$ that are sides of a complete biclique and such that $\min(|L|,|R|)>\frac{|S|}{3}$. 
We begin by noting the following crucial fact: for each $i=1,\ldots,r$ we have $V(G_i) \subseteq A$ or $V(G_i) \subseteq B$.
Suppose, in fact, that there exist $u \in V(G_i) \cap A$ and $v \in V(G_i) \cap B$.
Since $G_i$ is connected, along any path from $u$ to $v$ in $G_i$ there must exist an edge whose vertices belong to $A$ and $B$, respectively.
Without loss of generality we can thus assume that $u$ and $v$ are such vertices.
By definition of $G_i$ this means that $u$ and $v$ are not adjacent in $G$, a contradiction.
Now we distinguish the two cases on which the algorithm branches.
\begin{enumerate}\itemsep0pt
    \item $|V(G_1)| > \frac{|S|}{3}$.
    In this case, as $V(G_1) \subseteq A$ or $V(G_1) \subseteq B$, by construction of $R$ we have $R \supseteq B$ or $R \supseteq A$. Therefore:
    \begin{align}
        |R| \ge |A| = \frac{|A \cup B|}{2} \ge \frac{|S|}{2} \ge \frac{|S|}{3}
    \end{align}
    Hence, $\min(|L|,|R|)>\frac{|S|}{3}$.
    Moreover note that $L, R$ are sides of a biclique by construction of $R$.
    
    \item $|V(G_1)| \le \frac{|S|}{3}$.
    Then by the ordering of $G_1,\ldots,G_r$ we have $|V(G_i)| \le \frac{|S|}{3}$ for all $i=1,\ldots,r$.
    Note how this implies that the index $i$ computed by the algorithm satisfies:
    \begin{align}
        \left| \bigcup_{j=1}^{i-1} V(G_j) \right| \le \frac{|S|}{3}
    \end{align}
    Therefore:
    \begin{align}
        \frac{|S|}{3} < \left| \bigcup_{j=1}^i V(G_j) \right| = \left| \bigcup_{j=1}^{i-1} V(G_j) \right| + \left| V(G_i) \right| \le \frac{|S|}{3} + \frac{|S|}{3} = \frac{2|S|}{3}
    \end{align}
    This implies again $\min(|L|,|R|) \ge \frac{|S|}{3}$.
    Moreover $L, R$ form again the sides of a biclique; this is because $G_1,\ldots,G_r$ are connected components of $\bar G[S]$, hence in $G$ all edges are present between $V(G_j)$ and $V(G_{j'})$ for every distinct $j,j'$.
\end{enumerate}



We now analyze the running time of the algorithm.
Computing $\bar G[S]$ takes time $O(|S|^2 \log n)$ by checking for each of the edges in the sorted adjacency lists of $G$.
Computing and sorting the connected components $G_{1}, \ldots , G_{r}$ takes time $O(|S|^2 + |S| \log |S|)$.
The case $|V ( G_{1} ) | > \frac{|S|}{3}$ requires time $O( |V(G_1)| \cdot n )= O ( |S| \cdot n )$ if the intersection of the neighborhoods is done using a bitmap indexed by $V(G)$.
The case $|V ( G_{1} ) | \le \frac{|S|}{3}$ takes time $O(|S|)$.
We conclude that the algorithm runs in time $O(|S|^2 \log n + |S| \cdot n) = O ( |S| \cdot n \log n )$.
\end{proof}

We are now ready to prove \cref{thm:main-biclique-formal}.

\begin{proof}[Proof of \cref{thm:main-biclique-formal}.]
We prove the biclique size guarantees and the running time bounds separately.

\textbf{Guarantees.}
Let $\beta=\frac{c}{2}$, and define:
\begin{align}
    f(\alpha,n,s) := \frac{n}{\alpha^{2 + \frac{1}{\beta} \log \frac{n}{s}} }
\end{align}
We begin by showing that, whenever $s \le \frac{n}{f(\alpha, n,s)}$ and $\alpha \ge \max(2,f(\alpha,n,s))$,
\begin{enumerate}
    \item If item (i) of \Cref{lem:structural} holds, then the first phase of \cref{alg:mainclique} finds a biclique with at least $\lfloor\frac{2}{3} \cdot f(\alpha,n,s)\rfloor$ vertices per side.
    \item If item (ii) of \Cref{lem:structural} holds, then the second phase of \cref{alg:mainclique} finds with high probability a biclique with at least $\frac{c}{6} \cdot f(\alpha,n,s)$ vertices per side.
\end{enumerate}
Since by \Cref{lem:structural} itself at least one of items (i) and (ii) holds, the algorithm finds with high probability a balanced biclique of size $\Omega(c \cdot f(\alpha,n,s))$.
We will then choose $\alpha,s$ that satisfy the constraints above while (roughly) maximizing $f(\alpha,n,s)$.

\ignore{
We will thus solve for the largest possibile $s$ satisfying $s < \frac{n}{f(\alpha,n,s)}$, and then $\frac{c}{3} \cdot f(\alpha,n,s)$ will be a lower bound on the size of the biclique found. The smaller the value of $\alpha$ we choose, the larger the lower bound $f(\alpha,n,s)$ we get; however, as will be clear in the proof, in order for $f(\alpha,n,s)$ to actually be a lower bound, the second item in the statement of \Cref{lem:good_set_densifies} forces that 
\begin{equation}\label{vincolo_alpha}
\alpha \ge f(\alpha,n,s).    
\end{equation}
}

We prove 1.
To ease the notation let $f=f(\alpha,n,s)$.
If item (i) of \cref{lem:structural} holds, then $|V_{<s}| \ge f$, hence in $G$ (and thus in $\hat G$) there are at least $f$ vertices of slack smaller than $s \le \nicefrac{n}{f}$.
By a simple counting argument, any $k$ of those vertices have at  least $n-k(s+1)$ neighbors in common.
Choosing $k = \lfloor \frac{2f}{3}\rfloor$, and using the fact that $s \le \nicefrac{n}{f}$, the common neighbors are at least $n - \frac{2f}{3}\left(\nicefrac{n}{f}+1\right) = \frac{n-2f}{3}$.
Now observe that, since $\alpha \ge 2$, then $n \ge 4f$, hence $\frac{n-2f}{3} \ge \frac{2f}{3}$.
We conclude that the loop at line \ref{algo:biclique:line:P1} of \cref{alg:mainbiclique} eventually returns a biclique whose smallest side has at least $\lfloor\frac{2f}{3}\rfloor$ vertices.

We prove 2.
If item (ii) of \cref{lem:structural} holds, then $G$ contains an $(\alpha, \nicefrac{c}{2})$-bulging set $U$ of size at least $\alpha f$.
Let $S = \hat V_{<s_U}$.
Leveraging \cref{lem:good_set_densifies} through the same arguments used in the proof of \Cref{thm:main-clique-formal}, as long as $\alpha f$ and $s$ are in $\Omega(c^{-1} a \log n)$ and sufficiently large, with probability at least $1-n^{-a}$ we have $|S \cap H| > \frac{c}{2}|U|$ and $|S \cap H| > \frac{c\alpha}{2}|S \setminus H|$, where $H$ is the set of vertices of the planted biclique.
Now consider any ordering of $S$ (in particular the one given by the degrees).
If $|S \setminus H| = \emptyset$, then the ordering itself is a sequence of elements of $H$ of length $|S|=|S \cap H| > \frac{c}{2}|U| \ge \frac{c\alpha f}{2}$.
If instead $|S \setminus H| \ne \emptyset$, as $|S \cap H| > \frac{c\alpha}{2}|S \setminus H|$ the pigeonhole principle implies that the ordering contains a contiguous sequence of vertices of $H$ of length at least $\frac{c\alpha}{2}$, and therefore are least $\frac{c f}{2}$ as we are assuming $\alpha \ge f$.
We conclude that in any case the ordering of $S$ contains a contiguous sequence of vertices of $H$ of length $\min\Paren{\frac{c f}{2},\frac{c \alpha f}{2}} = \frac{c f}{2}$.
By construction \cref{alg:mainbiclique} eventually runs \DenseBicliqueFinder\ on that sequence and thus, by \cref{lem:bicliquepoly}, finds a biclique with at least $\frac{c f}{6}$ vertices per side.

It remains to choose suitable values of $\alpha,s$ so as to approximately maximize $f$ subject to the constraints $s \le \frac{n}{f(\alpha, n,s)}$ and $\alpha \ge \max(2,f(\alpha,n,s))$.
The argument above then yields with probability $1-n^{-a}$ a biclique with $\Omega(c f(\alpha,n,s))$ vertices per side.
We set:
\begin{align}
    s &= \frac{n}{f} \\
    \alpha &= f
\end{align}
This yields the equation:
\begin{equation}
\alpha = f(\alpha,n,s) = \frac{n}{\alpha^{2 + \frac{1}{\beta} \log \frac{n}{s}}} = \frac{n}{\alpha^{2 + \frac{1}{\beta} \log \alpha}}
\end{equation}
Recalling that $\beta=\nicefrac{c}{2}$, rearranging, and taking logarithms yields: 
\begin{equation} 
\frac{2}{c} \log^2 \alpha + 3 \log \alpha - \log n = 0.
\end{equation}
Solving for $\log \alpha$ gives:
\begin{align} 
\log \alpha = \frac{- 3 + \sqrt{9 + \frac{8 \log n}{c}} }{\frac{4}{c}} = \sqrt{ \frac{9}{16} c^2 + \frac c 2 \log n} -\frac{3}{4}c > \sqrt{\frac c 2 \log n} - 1
\end{align}
We conclude that:
\begin{equation}\label{bound_alpha}
    \alpha > 2^{\sqrt{\frac c 2 \log n} - 1}
\end{equation}
Notice that by definition $s,\alpha$ satisfy the constraint $\alpha \ge \max(2,f(\alpha,n,s))$ as long as $\alpha = f \ge 2$. Since we are assuming that $c \in \omega (\frac{1}{\log n})$, \cref{bound_alpha} guarantees that $f = \alpha \ge 2$ holds for large enough $n$.

Finally, the lower bound above on the size of each side of the biclique is thus:
\begin{align}
    \frac{cf}{6} = \frac{c\alpha}{6} > \frac{c}{12} \cdot 2^{\sqrt{\frac c 2 \log n}}
\end{align}


\ignore{
We prove 2. In this case, we are assuming that (ii) of \Cref{lem:structural} holds, thus that $G$ contains an $(\alpha, \beta)$-bulging set $U$ of size at least $\alpha f$. We show that a balanced biclique of size at least $\frac{c}{3} f$ is found with probability at least $1 - n^{-a}$ if $n$ is sufficiently large. Thanks to arguments analogous to those in the proof of \Cref{thm:main-clique}, (ii) of \Cref{lem:structural} implies that, with probability at least $1 - n^{-a}$, the set $\hat{V}_{< s_{U}}$ contains, for every vertex that does not belong to the planted balanced biclique, at least $\frac{c}{2} f $ vertices that belong to it. By the pigeonhole principle, this means that there exists a subset of vertices contiguous in the initial ordering of the vertices in $\hat{G}$, all belonging to the planted balanced biclique, of size at least $\frac{c}{2} f $. By applying \DenseBicliqueFinder\, to the subgraph induced by this subset of vertices, which is captured by one of the iterations of the second phase, \BalancedBicliqueFinder\ finds a balanced biclique of size at least $\frac{c}{3} f $. This proves 2.
}



\textbf{Running time.}
We describe a variant of \BalancedBicliqueFinder\ that runs in time $\Tilde{O}(\lVert \hat G\rVert)$ and finds a biclique of size at least $\nicefrac{1}{4}$-th of that of the original algorithm.
As a first thing, we compute $|E(\hat G)|$. If $|E(\hat G)|<\Paren{\frac{cn}{2}}^2$, then necessarily $c \le \frac{1}{\log n}$, and the bound of \cref{thm:main-biclique-formal} is smaller than $1$.
In this case we immediately return any edge of $G$, satisfying the bounds.
If instead $|E(\hat G)|\ge \Paren{\frac{cn}{2}}^2$ then we run the $\tilde{O}(n^2)$-time variant of \cref{alg:mainbiclique} described below.
This makes the running time in $\Tilde{O}(\lVert \hat G\rVert)$ in every case.
As a byproduct, the lower bound on the biclique size will shrink by a factor $\frac{4}{5}$.

The variant of \cref{alg:mainbiclique} is as follows.
First, observe that the loop of line \ref{algo:biclique:line:P1} can be implemented in $\tilde{O}(n^2)$ total time by computing $R'$ incrementally (this can be done either via a bitmap or using binary search over the sorted adjacency lists).
For the loop at line \ref{algo:biclique:line:P2}, we reduce the running time by coarsening.
Instead of iterating over all $1 \le i < j \le n$, for each $h=1,\ldots,\lceil\log n\rceil$ we iterate over all subsequences $v_i,\ldots,v_j$ with $i=k2^h$ and $j=k2^h + k-1$, for $k=0,1,2,\ldots$.
Clearly, for every contiguous subsequence $S$ of $v_1,\ldots,v_n$, we will iterate over some subsequence $S' \subseteq S$ with $|S'| \ge |S|/4$.
The bound on the size of the biclique thus decreases by a factor of $4$.
The running time can be easily bounded by noting that, for every $h=1,2,\ldots$, the total cost of invoking \DenseBicliqueFinder\ on all the subsequences of size $2^i$ is in $\tilde{O}(n^2)$ by \cref{lem:bicliquepoly}.
As the loop iterates over $O(\log n)$ values of $i$, we conclude that the second phase takes $\tilde{O}(n^2)$ time overall.

\ignore{This can be further refined by replacing $2$ with $(1+\eta)$ for any fixed $\eta > 0$, and by looking at every sequence of length $(1+\eta)^h$ that starts at an index $i$ in the form $\ell \cdot \eta(1+\eta)^h$ with $\ell=0,1,2,\ldots$, for $h=1,\ldots,\log_{1+\eta}n$.
It is not hard to see that, now, we will iterate over some $S' \subseteq S$ with $|S'| \ge \frac{|S|}{1+O(\eta)}$. Thus by choosing $\eta > 0$ small enough we iterate over $S'$ with, say, $|S'| \ge \frac{4}{5}|S|$.}

\ignore{
The new second phase can be defined as follows. For $i = 1 , \ldots , \sqrt{\log n}$, the set $V$ is partitioned into $\left\lceil \frac{n}{2^{i}} \right\rceil$ subsets of vertices contiguous in the ordering, $S_{ij}$, such that $|S_{ij}| = 2^{i}$ for every $j = 1, \ldots , \left\lceil \frac{n}{2^{i}} \right\rceil$ (except possibly one, of smaller size), and to every $\hat{G}[S_{ij}]$ is applied the algorithm of \Cref{lem:bicliquepoly}. This new version of \BalancedBicliqueFinder\, finds a balanced biclique of the size given by the statement, because its second phase yields one of size up to a factor $\frac{1}{4}$ of the size of the one returned by \BalancedBicliqueFinder. In fact, it still holds (with the same probability as before) that by the pigeonhole principle there exists a subset of vertices contiguous in the initial ordering, all belonging to $H$, of size at least $\frac{c}{2} f (\alpha, n, s)$. Moreover, there exists a suitable index $\overline{i} \in \{1, \ldots \sqrt{\log n}\}$ such that $\frac{c}{8} f (\alpha, n , s) \le 2^{\overline{i}} \le \frac{c}{4} f (\alpha, n, s)$, thus for some $\overline{j}$ there exists a subset $S_{\overline{i} \overline{j}}$ of at least $\frac{c}{8} f (\alpha, n, s)$ vertices of $H$. Applying the algorithm of \Cref{lem:bicliquepoly} to the subgraph of $\hat{G}[S_{\overline{i} \overline{j}}]$ yields a balanced biclique of size at least $\frac{c}{12} f(\alpha, n , s)$, which satisfies the bound of the statement.

This modified algorithm runs in time $\Tilde{O}(n^2)$. In fact, the bottleneck of the time complexity is given by the second phase, in which the algorithm of \Cref{lem:bicliquepoly} runs in time $O ( 2^{i} \cdot n \log n  )$ when it is given in input $\hat{G}[S_{ij}]$, for every $i = 1 , \ldots , \sqrt{\log n}$ and $j = 1, \ldots , \left\lceil \frac{n}{2^{i}} \right\rceil$: the total running time is therefore $\sqrt{\log n} \cdot O ( n^2 \log n ) = \tilde{O} ( n^2 )$. 

We conclude that the modified version of \BalancedBicliqueFinder\, has complexity $\tilde{O} ( \lVert \hat G \rVert )$. In fact, if $\hat{G}$ has less than $\frac{n^2}{4 \log^{2}n}$, the running time of the algorithm is $O (1)$, thus trivially $ O( \lVert \hat G \rVert)$, otherwise the fact that $\hat{G}$ has at least $\frac{n^2}{4 \log^{2}n}$ edges implies that $ \lVert \hat G \rVert = \Tilde{\Theta} (n^{2}) $, and the claim holds.
}

\end{proof}

\section{A lower bound on densification}\label{sec:lower-bound}
In this section we prove  \cref{thm:densification_LB}. 
This shows that, whenever $c < 1/2$, there exist arbitrarily large graphs $G$ such that the high degree profiles of typical instances from $\cG(G,K_{cn})$ are essentially uncorrelated with the planted clique.  

Throughout the section, for a graph $G$ we let $\kappa(G)$ be the size of the largest clique in $G\,.$
We start by defining a graph $H$ that has between one and two vertex for every degree (or, equivalently, every slack) from $1$ to $n-1$.
Let $H=(V,E)$ where $V=[n]$ for $n \ge 3$, and
\begin{align}
E=\Big\{\{u,v\} : u,v \in V, u \ne v, u+v \le n+1\Big\} \, .  
\end{align}
Note that $N_H(u) = [1,n-u+1] \setminus \{u\}$; hence
\begin{align}
    s_u = \left\{
    \begin{array}{ll}
        u-1 & u \le \frac{n+1}{2} \\
        u-2 & u > \frac{n+1}{2}
    \end{array}
    \right.
\end{align}
This implies that, for every $0 \le s \le n-1$,
\begin{align}
    V_{\le s} \in [s+1, s+2] \,.
\end{align}

\noindent The graph $G$ of \Cref{thm:densification_LB} is a perturbation of $H$ as given by the next result.
\begin{lemma}\label{lem:random_LB_graph}
    Let $G$ be an $n$-vertex graph, let $\eta \in [0,1]$, and let $G'$ be obtained from $G$ by deleting each edge independently with probability $\eta$.
    For every $a > 1$, with probability at least $1-2n^{1-a}$:
    \begin{enumerate}\itemsep0pt
        \item $\kappa(G') < 2a\frac{\ln n}{\eta} + 1$.
        \item $|V_{\le s}| \le |V'_{\le s'}|$ for all $s \ge 0$, where $s' = s + \eta n + \sqrt{a n \ln n}$.
    \end{enumerate}
\end{lemma}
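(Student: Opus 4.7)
The plan is to prove the two items by independent concentration arguments and then combine via a final union bound. The randomness decouples nicely because each edge of $G$ is retained in $G'$ independently with probability $1-\eta$, so both the clique-killing argument and the degree-concentration argument rely only on these Bernoulli coin flips.

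For item (1), I would argue by a direct union bound over $k$-subsets. Note that any clique in $G'$ must already be a clique in $G$, so a given subset $S \subseteq V$ of size $k$ is a clique in $G'$ with probability at most $(1-\eta)^{\binom{k}{2}} \le e^{-\eta k(k-1)/2}$. Hence
\[
\Pr[\kappa(G') \ge k] \le \binom{n}{k}(1-\eta)^{\binom{k}{2}} \le n^{k}\, e^{-\eta k(k-1)/2}.
\]
Taking $k$ to be the smallest integer with $k-1 \ge 2a\ln n/\eta$, the exponent becomes $k\ln n - a k\ln n = -(a-1)k \ln n$, so the right-hand side is at most $n^{-(a-1)k} \le n^{1-a}$. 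This yields item (1) with failure probability at most $n^{1-a}$.

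For item (2), I would prove the stronger pointwise statement that, with high probability, every vertex $v$ satisfies $s'_v \le s_v + \eta n + \sqrt{an\ln n}$. Fix $v$; its degree $d'_v$ in $G'$ is a sum of $d_v \le n$ independent $\mathrm{Bernoulli}(1-\eta)$ variables, with $\E[d'_v] = (1-\eta) d_v$. Hoeffding's inequality gives
\[
\Pr\!\left[d'_v \le (1-\eta) d_v - t\right] \le e^{-2t^2/d_v} \le e^{-2t^2/n}.
\]
Translating to slacks via $s'_v = n-1-d'_v$ and setting $t = \sqrt{an\ln n}$, we obtain
\[
\Pr\!\left[s'_v \ge s_v + \eta d_v + \sqrt{an\ln n}\right] \le n^{-2a}.
\]
Using $\eta d_v \le \eta n$ and a union bound over the $n$ vertices, with probability at least $1 - n^{1-2a}$ every $v$ satisfies $s'_v \le s_v + \eta n + \sqrt{an\ln n}$. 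Under this event, for every $s \ge 0$ the inclusion $V_{\le s}(G) \subseteq V'_{\le s'}(G')$ holds with $s' = s + \eta n + \sqrt{an\ln n}$, which immediately gives the cardinality inequality of item (2).

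Combining the two bad events, the total failure probability is at most $n^{1-a} + n^{1-2a} \le 2 n^{1-a}$ (using $a > 1$), matching the statement. The only delicate point is tuning the constants in $k$ and $t$ so that the bounds $2a\ln n/\eta + 1$ and $\sqrt{an\ln n}$ come out exactly as claimed; there is no conceptual obstacle, as both items reduce to textbook Chernoff/Hoeffding calculations together with a union bound.
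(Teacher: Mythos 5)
Your proposal is correct and matches the paper's proof essentially step for step: for item (1) both bound the probability that a fixed $k$-subset survives as a clique by $(1-\eta)^{\binom{k}{2}}$ and union-bound over the at most $n^k$ subsets; for item (2) both fix a vertex, apply a Hoeffding bound to the sum of $d_v$ independent Bernoullis governing its deleted edges, convert degree to slack, relax $\eta d_v \le \eta n$, and union-bound over vertices to obtain the inclusion $V_{\le s}\subseteq V'_{\le s'}$ for all $s$. The only cosmetic differences are that you phrase the concentration in terms of $d'_v$ before translating to $s'_v$ (the paper works with $s'_v$ directly), and you keep the sharper Hoeffding constant $e^{-2t^2/n}$ yielding $n^{-2a}$ per vertex, whereas the paper uses the looser $e^{-t^2/n}$ yielding $n^{-a}$; both comfortably deliver the $1-2n^{1-a}$ bound after the final union bound.
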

\begin{proof}
\emph{Item 1.} Fix $U \subseteq V$ on $k \ge 2a\frac{\ln n}{\eta} + 1$ vertices. Then:
\begin{align}
    \Pr[G'[U] \text{ is a clique}]
    \le \left(1-\eta\right)^{\binom{k}{2}}
    < e^{-\eta \binom{k}{2}}
    = e^{-k \cdot \eta \frac{k-1}{2}}
    \le e^{-k a \ln n}
    = n^{-a k} \,.
\end{align}
Taking a union bound over all $U$ yields $\Pr[\kappa(G') \ge k] < n^{(1-a)k} \le n^{1-a}$.

\emph{Item 2.} 
Fix $u \in V$. Then $s'_u = s_u + \sum_{i=1}^{d_u}X_i$, where the $X_i$ are independent Bernoulli random variables with parameter $\eta$. By Hoeffding's inequality, for every $t \ge 0$,
\begin{align}
    \Pr[s'_u \ge s_u + \eta n + t] \le \Pr[s'_u \ge s_u + \eta d_u + t] \le e^{-\frac{t^2}{d_u}} < e^{-\frac{t^2}{n}}
\end{align}
For $t = \sqrt{a n\ln n}$ we obtain $\Pr[s'_u \ge s_u + \eta n + \sqrt{a n \ln n}]\le n^{-a}$.
This implies that, for every $s \ge 0$, every $v \in V_{\le s}$ satisfies $v \in V'_{\le s'}$ with probability $1-n^{-a}$, where $s' = s + \eta n + \sqrt{a n \ln n}$.
By a union bound we conclude that, with probability $1-n^{1-a}$, we have $|V'_{\le s'}| \ge |V_{\le s}|$ for all $s \ge 0$.
\end{proof}

As a corollary we get the graph $G$ used in the proof of \Cref{thm:densification_LB}:
\begin{corollary}\label{cor:exists_lb_graph}
For every $\eta \in [0,1]$ and every $n \ge 3$ there exists an $n$-vertex graph $G$ such that:
\begin{enumerate}\itemsep0pt
    \item $\kappa(G) \le 4 \frac{\ln n}{\eta} + 1$.
    \item $s - \eta n - \sqrt{2 n \ln n} \le |V_{\le s}| \le s+2$ for all $s \ge 0$.
\end{enumerate}
\end{corollary}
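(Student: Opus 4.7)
The plan is to instantiate \cref{lem:random_LB_graph} with the explicitly defined graph $H$ and the given parameter $\eta$, taking the perturbation parameter $a=2$. Since the conclusion of \cref{lem:random_LB_graph} holds with probability at least $1-2n^{-1} > 0$ for every $n \ge 3$, the probabilistic method immediately yields the existence of a graph $G$ (a realization of $G'$) satisfying both conclusions of that lemma simultaneously. It then remains to translate those two conclusions, together with the slackness identity $|V_{\le s}(H)| \in [s+1, s+2]$ computed for $H$, into the two bounds claimed in the corollary.

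For item 1 of the corollary, I would simply read off item 1 of \cref{lem:random_LB_graph} with $a=2$, which gives $\kappa(G) < 2\cdot 2 \cdot \frac{\ln n}{\eta} + 1 = 4\frac{\ln n}{\eta}+1$, matching the stated bound.

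For item 2, I would handle the two inequalities separately. The upper bound $|V_{\le s}(G)| \le s+2$ is the easier of the two: since $G$ is obtained from $H$ by deleting edges, the slack of every vertex can only increase, so $V_{\le s}(G) \subseteq V_{\le s}(H)$, and then $|V_{\le s}(H)| \le s+2$ by the computation of slacks in $H$ recalled just before the lemma. For the lower bound, I would apply item 2 of \cref{lem:random_LB_graph} with the substitution $s \mapsto s - \eta n - \sqrt{2n\ln n}$, so that the resulting $s'$ equals $s$. Whenever $s - \eta n - \sqrt{2n\ln n} \ge 0$, this gives
\begin{align*}
|V_{\le s}(G)| \;\ge\; |V_{\le s - \eta n - \sqrt{2n\ln n}}(H)| \;\ge\; s - \eta n - \sqrt{2n\ln n} + 1 \;\ge\; s - \eta n - \sqrt{2n\ln n},
\end{align*}
and when $s - \eta n - \sqrt{2n\ln n} < 0$ the bound is vacuous since $|V_{\le s}(G)|\ge 0$.

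I do not expect any serious obstacle here; the corollary is essentially a bookkeeping exercise combining the explicit slack profile of $H$ with the conclusion of \cref{lem:random_LB_graph}. The only small care needed is to verify that $a=2$ (or any $a>1$) is enough to make the failure probability strictly less than $1$ for all $n \ge 3$, so that a deterministic $G$ exists, and to handle the boundary regime where $s$ is smaller than $\eta n + \sqrt{2n\ln n}$ so that the lower bound statement is trivially satisfied.
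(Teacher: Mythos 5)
Your proof is correct and matches the paper's approach exactly: the paper proves this corollary by applying \cref{lem:random_LB_graph} to $H$ with $a=2$ and noting that $1-2n^{1-a}>0$ for $n\ge 3$, which is precisely your argument (with the translation steps made more explicit). The extra bookkeeping you supply — the monotonicity observation that deleting edges only increases slacks (so $V_{\le s}(G')\subseteq V_{\le s}(H)$, giving the upper bound), the substitution $s\mapsto s-\eta n-\sqrt{2n\ln n}$ in item 2 of the lemma for the lower bound, and the vacuousness of the bound when that quantity is negative — is exactly what the paper leaves implicit, and you handle it correctly.
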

\begin{proof}
    Apply \Cref{lem:random_LB_graph} to the graph $H$ defined above for $a=2$, noting that $1-2n^{1-a}>0$.
\end{proof}

The next result bounds the number of vertices of the planted clique that end up having a certain slack in $\hat G$.
\begin{lemma}\label{lem:KcapV_lb}
Let $c\in (0,1)$, let $G$ be any graph, and let $\hat G \sim \scG(G,K_{cn})$.
With probability at least $1-\frac{3}{n}$ we have simultaneously for all $s \ge 0$:
\begin{align}
    c \cdot |V_{\le s}| - \sqrt{2 n \ln n} \le \big| K \cap \hat V_{\le s} \big| \le c \cdot |V_{\le s^*}| + \sqrt{2 n \ln n}.
\end{align}
where $s^* = \frac{s + \sqrt{2 n \ln n}}{1-c}$.
\end{lemma}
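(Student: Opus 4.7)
The plan is to decouple the bounds on $|K \cap \hat V_{\le s}|$ into two concentration statements. The first concerns the random set $K = \phi(V(K_{cn}))$, which is a uniformly random subset of $V(G)$ of size $cn$, so that for any fixed subset $T \subseteq V$ the count $|K \cap T|$ is hypergeometric with mean exactly $c|T|$. The second concerns the number of new edges that each vertex of $K$ acquires in $\hat G$, which determines how much its slack drops from $G$ to $\hat G$.

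First I would prove the ``set-concentration'' event: $\bigl||K \cap V_{\le s}| - c|V_{\le s}|\bigr| \le \sqrt{2n\ln n}$ for all $s \ge 0$ simultaneously. Writing $|K \cap T| = \sum_{v \in T} \mathbf{1}[v \in K]$ as a sum of negatively associated Bernoulli indicators and applying Hoeffding's inequality gives a per-$T$ failure probability of $O(n^{-4})$. Since the family $\{V_{\le s}\}_{s \ge 0}$ is nested and contains at most $n+1$ distinct sets, a union bound costs only a factor of $n$ and yields a total failure of $O(n^{-3})$.

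Next, for every $v \in V$ I define $X_v := |(K \setminus \{v\}) \cap \overline{N_G(v)}|$, the number of $G$-non-neighbors of $v$ that land in $K$. The key identity is that for $v \in K$ one has $\hat s_v = s_v - X_v$ (while for $v \notin K$, $\hat s_v = s_v$). Conditionally on $v \in K$, the law of $X_v$ is hypergeometric on a population of size $n-1$ with $s_v$ successes and $cn-1$ draws, so $\mathbb{E}[X_v \mid v \in K] = (cn-1)s_v/(n-1) \le cs_v$. Hoeffding then gives $\Pr[X_v \ge cs_v + \sqrt{2n\ln n} \mid v \in K] \le O(n^{-4})$, and a union bound over $v$ produces an ``individual-concentration'' event on which $X_v \le cs_v + \sqrt{2n\ln n}$ for every $v \in K$ (up to a negligible $O(1)$ correction absorbed into the constant).

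From these two events the conclusion is immediate. For the lower bound, any $v \in K \cap V_{\le s}$ has $\hat s_v \le s_v \le s$, so $K \cap V_{\le s} \subseteq K \cap \hat V_{\le s}$, and set-concentration gives $|K \cap \hat V_{\le s}| \ge c|V_{\le s}| - \sqrt{2n\ln n}$. For the upper bound, individual-concentration yields $\hat s_v \ge (1-c)s_v - \sqrt{2n\ln n}$ for every $v \in K$; hence $\hat s_v \le s$ forces $s_v \le (s + \sqrt{2n\ln n})/(1-c) = s^*$, i.e., $K \cap \hat V_{\le s} \subseteq K \cap V_{\le s^*}$. Applying set-concentration at $s^*$ then bounds this by $c|V_{\le s^*}| + \sqrt{2n\ln n}$. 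The main (minor) technical point is the uniformity in $s \ge 0$, which is painless because both $\{V_{\le s}\}$ and $\{V_{\le s^*}\}$ take only $O(n)$ distinct values; the total failure probability is then $O(n^{-3}) \le 3/n$, matching the statement.
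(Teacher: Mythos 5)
Your proposal is correct and follows essentially the same route as the paper: both rely on (a) the monotone inclusion $V_{\le s}\subseteq\hat V_{\le s}$ for the lower bound, (b) Hoeffding on $|K\cap V_{\le s}|$ (viewed as a sum of negatively associated/non-positively correlated indicators), and (c) concentration of $\hat s_v$ for $v\in K$ to deduce $K\cap\hat V_{\le s}\subseteq K\cap V_{\le s^*}$, with union bounds over the $O(n)$ distinct values of $s$ throughout. Your reorganization into a two-sided ``set-concentration'' event plus an ``individual-concentration'' event, and your explicit conditioning on $v\in K$ with the hypergeometric law (avoiding the paper's slightly informal ``$\hat s_v = s_v - \sum_{i=1}^{s_v}X_i$'' phrasing, which implicitly presumes $v\in K$), is a small gain in clarity but not a different proof.
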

\begin{proof}
\emph{Lower bound.}
Note that $|K \cap \hat V_{\le s}| \ge |K \cap V_{\le s}|$, and $|K \cap V_{\le s}| = \sum_{i=1}^{|V_{\le s}|}$ where the $X_i$ are non-positively correlated Bernoulli random variables of parameter $c$.
By Hoeffding's inequality, then, the probability that the lower bound of the claim fails is at most $\frac{1}{n^2}$ for any given $s \ge 0$.
By a union bound, thus, the lower bound holds fails for some $s$ with probability at most $\frac{1}{n}$.

\emph{Upper bound.}
Let $v \notin V_{\le s^*}$, so $s_v > s^*$.
Note that $\hat s_v = s_v - \sum_{i=1}^{s_v} X_i$, with the $X_i$ non-positively correlated Bernoulli random variables of parameter $c$.
Therefore $\E[\hat s_v] = (1-c)s_v$, and:
    \begin{align}
        s = (1-c)s^* - \sqrt{2 n \ln2 n} < (1-c)s_v - \sqrt{2 n \ln n} = \E[\hat s_v] - \sqrt{2 n \ln 2n}
    \end{align}
    By Hoeffding's inequality we then get $\Pr\big[ \hat s_v \le s \big] \le \frac{1}{n^2}$.
    By a union bound this implies that, with probability at least $1-\frac{1}{n}$,
    \begin{align}
        K \cap \hat V_{\le s} \subseteq K \cap V_{\le s^*} \qquad \forall s=1,\ldots,n-1
    \end{align}
    Consider then $|K \cap V_{\le s^*}|$.
    Note that this is a sum of $|V_{\le s^*}|$ non-positively correlated Bernoulli random variables of parameter $c$.
    Another application of Hoeffding's inequality yields with probability at least $1-\frac{1}{n^2}$:
    \begin{align}
    \big|K \cap V_{\le s^*}\big| \le c \cdot |V_{\le s^*}| + \sqrt{2 n \ln n}
    \end{align}
    
    A final union bound over all $s\ge 0$ and the three events above concludes the proof.
\end{proof}

We are now ready to prove \Cref{thm:densification_LB}.

\begin{proof}[Proof of \Cref{thm:densification_LB}]
Let $\eta = a \cdot c^{-1}\sqrt{\frac{\ln n}{n}}$ for some $a>0$ to be defined.
Let $G$ be the corresponding graph given by \Cref{cor:exists_lb_graph}, and let $\hat G \sim \scG(G,K_{cn})$.
We begin by observing that it is sufficient to prove \Cref{thm:densification_LB} for the case $\{v_1,\ldots,v_j\} = \hat V_{\le s}$ for some $s \ge 0$.

Consider indeed any ordering $v_1,\ldots,v_n$ of the vertices of $\hat G$ by nonincreasing degree.
Observe that for every $j=1,\ldots,n$ there exists $s \ge 0$ and $\hat S \subseteq \hat V_{\le s} \setminus \hat V_{\le s-1}$ such that
\begin{align}
    \{v_1,\ldots,v_j\} = \hat V_{\le s-1} \,\dot\cup\, \hat S
\end{align}
Now suppose the bound of \Cref{lem:KcapV_lb} holds.
We claim that $|\hat S| \le (2+a) \frac{\sqrt{n \ln n}}{c}$.
Indeed:
\begin{align}
|\hat S| &
\le |\hat V_{\le s}| \setminus |\hat V_{\le s-1}| \\
&
\le |\hat V_{\le s}| \setminus |V_{\le s-1}| && V_{\le s-1} \subseteq \hat V_{\le s-1} \\
& \le \left(c \cdot\frac{s + \sqrt{2 n \ln n}}{1-c} + \sqrt{2n \ln n}\right) - \big(s-1 - \eta n \big) && \text{\Cref{lem:KcapV_lb} and \Cref{cor:exists_lb_graph}} \\
& = \left(c \cdot\frac{s + \sqrt{2 n \ln n}}{1-c} + \sqrt{2n \ln n}\right) - \big(s-1 - a \frac{\sqrt{n \ln n}}{c} \big) && \text{definition of }\eta \\
&\le (2+a) \frac{\sqrt{2 n \ln n}}{c}
\end{align}
where in the last inequality we used $c \le \frac{1}{2}$.
Now notice that the upper bound of \Cref{thm:densification_LB} has an $O\left(\frac{\sqrt{2 n \ln n}}{c}\right)$ additive term.
Therefore, as said, it is sufficient to prove the theorem for the case $\{v_1,\ldots,v_j\} = \hat V_{\le s}$ for some $s \ge 0$.

Consider then any $0 \le s \le n-1$.
If $\big|\hat V_{\le s}\big| \le a \cdot c^{-1}\sqrt{n \ln n}$ then \Cref{eq:LB_clique_bound} is trivially true.
Suppose then $\big|\hat V_{\le s}\big| > a \cdot c^{-1} \sqrt{n \ln n}$.
We have: 
\begin{align}
    \big| K \cap \hat V_{\le s} \big| &\le c \cdot |V_{\le s^*}| + \sqrt{2 n \ln n} && \text{\Cref{lem:KcapV_lb}} \\
    & \le c \cdot (s^* + 2) + \sqrt{2 n \ln n} && \text{item 2 of \Cref{cor:exists_lb_graph}} \\
    &= O\left(cs + \sqrt{n \ln n}\right) && \text{definition of } s^* \text{ and }c \le \frac{1}{2}
\end{align}
By item 1 of \Cref{cor:exists_lb_graph}, and since $V_{\le s} \subseteq \hat V_{\le s}$, we have $s \le \big|\hat V_{\le s}\big|$.
As $\big|\hat V_{\le s}\big| > a \cdot \sqrt{c^{-1}\, n \ln n}$, we have $\sqrt{n \ln n} < \frac{c}{a} \big|\hat V_{\le s}\big|$.
Plugging these bounds in the inequality above gives $\big| K \cap \hat V_{\le s} \big| = O(c |\hat V_{\le s}\big|)$.
To conclude, observe that:
\begin{align}
    \kappa\left(\hat G\big[\hat V_{\le s}\big]\right) \le \kappa(G) + \big| K \cap \hat V_{\le s} \big|
\end{align}
and that $\kappa(G) \le \frac{\ln n}{\eta} = a \, c \sqrt{n \ln n}$ by \Cref{cor:exists_lb_graph} and our choice of $\eta$.
Together with our bound on $\big| K \cap \hat V_{\le s} \big|$ this gives the claim.
\end{proof}

\ifnum\draft=1
\clearpage
\section*{TODOs}
\begin{enumerate}
\item semplificare dimostrazione del lemma bilanciamento biclique
\item riformulare nostro risultato come riduzione al problema worst-case con $c=\Omega(1)$, per poter fare plug-in di algoritmi tipo Manurangsi+AlonKahale
\item  controllare la definizione di slack critico ($m/n$ ?) 
\item  sostituire $n/s(G_0)$ con una variabile sensata
\item generalizzare al caso $\Delta(G_0)\le n - s(n)$; ancora meglio in funzione del numero $n_s$ di vertici con slack $> s$.
\item capire quali proprietà della distribuzione servono davvero
\item notare robustezza rispetto a cambiamenti monotoni (es.\ rimozione archi fuori da $K$)
\item studiare grafo avversariale perturbato con probabilità $p$. Riusciamo a interpolare fra $G(n,p)$ e caso peggiore?
\item LB per teorema densificazione (esiste $G$ tale che in $\hat G$ nessuna sottosequenza di vertici ordinati contiene una clique di taglia $>$ BLA).
\item raffinare analisi fasce usando gradi precisi più deviazione standard 
/ alta probabilità. Com'è il grafo cattivo?
\item dare un running time concreto all'algoritmo (es.\ $n^5$): funziona in tempo $\tilde{O} (|G|)?$
\item provare BH al posto di AK
\item inserire osservazioni aggiuntive (es.\ sui core)
\item che succede se invece di piantare clique/biclique piantiamo un qualunque sottografo denso? Cosa recuperiamo?
\item si può fare qualcosa per $t$-clique piantata?
\item analisi algoritmo pelatura greedy, almeno su qualche esempio informativo
\item tecniche di rimozione di sottografi (in che modo sarebbero utili?)
\item riduzione da clique a colored partitioned biclique
\end{enumerate}
\fi

\phantomsection
\addcontentsline{toc}{section}{Bibliography}
{\footnotesize
\bibliographystyle{amsalpha} 
\bibliography{scholar, tesiagrimonti}
}

\clearpage
\appendix
\section{Concentration inequalities}\label{sec:concentration}
The following bounds can be found in \cite{augerdoerr} or derived from \cite{panconesidub}.
Let $X_1, \ldots, X_n$ be binary random variables.
We say that $X_1, \ldots, X_n$ are non-positively correlated if for all $I \subseteq \{ 1, \ldots , n \}$:
\begin{equation}
    \Pr \left( \forall i \in I \mid X_i = 0 \right) \le \prod_{i \in I} \Pr \left( X_i = 0 \right)
\end{equation}
and
\begin{equation}
\Pr \left( \forall i \in I \mid X_i = 1 \right) \le \prod_{i \in I} \Pr \left( X_i = 1 \right).
\end{equation}
Then:
\begin{lemma}\label{lem:chernoff}
Let $X_1, \ldots, X_n$ be independent or, more generally, non-positively correlated binary random variables. Let $a_1, \ldots, a_n \in \left[ 0, 1 \right]$ and $X = \sum_{i = 1}^{n} a_i X_i$. Then, for any $\varepsilon > 0$, we have:
\begin{equation}
\Pr \left( X \le (1 - \varepsilon) \E \left[ X \right] \right) \le  e^{- \frac{\varepsilon^2}{2} \E \left[ X \right]}
\end{equation}
and
\begin{equation}
\Pr \left( X \ge (1 + \varepsilon) \E \left[ X \right] \right) \le  e^{- \frac{\varepsilon^2}{2 + \varepsilon} \E \left[ X \right]}
\end{equation}
\end{lemma}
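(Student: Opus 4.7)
The plan is to derive both tail bounds via the classical Chernoff method: apply Markov's inequality to a moment generating function, factorize the MGF using the non-positive correlation hypothesis, and then optimize over the exponential parameter. The weights $a_i \in [0,1]$ contribute only a routine convexity step.

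First, I will treat the upper tail. For $t > 0$, Markov's inequality gives $\Pr(X \ge (1+\varepsilon)\E X) \le e^{-t(1+\varepsilon)\E X}\,\E[e^{tX}]$, so the work is to bound the MGF. Since $X_i \in \{0,1\}$, I can write $e^{t a_i X_i} = 1 + (e^{t a_i}-1)X_i$ and expand
\begin{align*}
\E\Brac{\prod_i e^{t a_i X_i}} = \sum_{S \subseteq [n]} \Paren{\prod_{i \in S}(e^{t a_i}-1)}\,\E\Brac{\prod_{i \in S} X_i}.
\end{align*}
All coefficients $e^{t a_i}-1$ are non-negative, and by the first non-positive correlation inequality $\E[\prod_{i\in S}X_i] = \Pr(\forall i \in S:\, X_i=1) \le \prod_{i \in S}\E[X_i]$. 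Hence the display is at most $\prod_i \bigl(1 + (e^{t a_i}-1)\E[X_i]\bigr) \le \exp\!\bigl(\sum_i (e^{t a_i}-1)\E[X_i]\bigr)$. Using the convexity of $x \mapsto e^{tx}$ on $[0,1]$, I get $e^{t a_i}-1 \le a_i(e^t-1)$, so the bound collapses to $e^{(e^t-1)\E X}$. Choosing $t = \ln(1+\varepsilon)$ and invoking the standard estimate $(1+\varepsilon)\ln(1+\varepsilon)-\varepsilon \ge \varepsilon^2/(2+\varepsilon)$ yields the upper tail.

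Next, the lower tail. Here I apply Markov to $e^{-tX}$ for $t > 0$, and the only real subtlety is that the naive expansion of $e^{-t a_i X_i}$ introduces alternating signs that block the non-positive correlation inequality. I resolve this by passing to $Y_i := 1 - X_i$: write $e^{-t a_i X_i} = e^{-t a_i} + (1-e^{-t a_i})Y_i$, whose coefficients are non-negative. Expanding the product and using $\E[\prod_{i \in S}Y_i] = \Pr(\forall i \in S:\, X_i = 0) \le \prod_{i \in S}\E[Y_i]$ — this is precisely the second non-positive correlation inequality in the definition — I obtain
\begin{align*}
\E[e^{-tX}] \le \prod_i \Paren{1 - (1-e^{-t a_i})\E[X_i]} \le \exp\!\Paren{-\sum_i (1-e^{-t a_i})\E[X_i]}.
\end{align*}
Concavity of $x \mapsto 1 - e^{-tx}$ on $[0,1]$ gives $1-e^{-t a_i} \ge a_i(1-e^{-t})$, collapsing the bound to $e^{-(1-e^{-t})\E X}$. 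Setting $t = -\ln(1-\varepsilon)$ and using $(1-\varepsilon)\ln(1-\varepsilon)+\varepsilon \ge \varepsilon^2/2$ yields the lower tail bound.

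The main obstacle is precisely the lower-tail factorization: the direct expansion produces $\sum_S (-1)^{|S|}(\cdots)\E[\prod_{i\in S}X_i]$, whose signs do not line up with either of the two correlation inequalities. The substitution $Y_i = 1-X_i$ is what makes the argument go through cleanly, since it routes the bound through the ``all-zeros'' form of non-positive correlation that is stated (and needed) in the hypothesis. After that, everything reduces to the textbook Chernoff optimization together with the two standard scalar inequalities for $\ln(1 \pm \varepsilon)$.
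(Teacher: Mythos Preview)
Your proof is correct. The paper does not actually prove this lemma; it merely cites \cite{augerdoerr} and \cite{panconesidub}, so your argument supplies what the paper omits. One trivial slip: you have swapped the labels of the two non-positive correlation inequalities (the upper tail uses the ``all ones'' inequality, which is the second one in the paper's definition, and the lower tail uses the ``all zeros'' one, which is the first); the mathematics is unaffected. You may also want to note explicitly that the choice $t = -\ln(1-\varepsilon)$ requires $\varepsilon < 1$, while for $\varepsilon \ge 1$ the lower-tail bound is immediate from $X \ge 0$ (for $\varepsilon = 1$, let $t \to \infty$ to get $\Pr(X=0) \le e^{-\E X} \le e^{-\E X/2}$).
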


\end{document}